\newtheorem{lemma}{Lemma}[section]
\newtheorem{theorem}[lemma]{Theorem}
\newtheorem{definition}[lemma]{Definition}
\newtheorem{remark}[lemma]{Remark}
\begin{document}\baselineskip 0.56cm

\title{Approximation Algorithm for Minimum $p$ Union Under a Geometric Setting}
\author{\footnotesize Yingli Ran$^1$\thanks{Corresponding author: Yingli Ran, ranyingli@zjnu.edu.cn}\quad Zhao Zhang$^1$\thanks{Corresponding author: Zhao Zhang, hxhzz@sina.com}\\
	{\it\small $^1$ College of Mathematics and Computer Science, Zhejiang Normal University}\\
    {\it\small Jinhua, Zhejiang, 321004, China}}
\date{}
\maketitle

\begin{abstract}
In a minimum $p$ union problem (Min$p$U), given a hypergraph $G=(V,E)$ and an integer $p$, the goal is to find a set of $p$ hyperedges $E'\subseteq E$ such that the number of vertices covered by $E'$ (that is $|\bigcup_{e\in E'}e|$) is minimized. It was known that Min$p$U is at least as hard as the densest $k$-subgraph problem. A question is: how about the problem in some geometric settings? In this paper, we consider the unit square Min$p$U problem (Min$p$U-US) in which $V$ is a set of points on the plane, and each hyperedge of $E$ consists of a set of points in a unit square. A $(\frac{1}{1+\varepsilon},4)$-bicriteria approximation algorithm is presented, that is, the algorithm finds at least $\frac{p}{1+\varepsilon}$ unit squares covering at most $4opt$ points, where $opt$ is the optimal value for the Min$p$U-US instance (the minimum number of points that can be covered by $p$ unit squares).

\vskip 0.2cm\noindent {\bf Keyword}: minimum $p$ union; unit square; approximation algorithm.
\end{abstract}

\section{Introduction}\label{sec1}
The {\em minimum $p$ union problem} (Min$p$U) was first proposed by Chlamt\'{a}\v{c} et al. \cite{Chlamtac}. Given a hypergraph $G$ with vertex set $V$ and hyperedge set $E$, together with an integer $1\leq p\leq m$, where $m$ is the number of hyperedges, the goal of Min$p$U is to select $p$ hyperedges such that the number of vertices in their union is as small as possible. A $2\sqrt{m}$-approximation algorithm was given in \cite{Chlamtac}, which was further improved to $O(m^{1/4+\varepsilon})$ in \cite{Chlamtac1}. Min$p$U is a generalization of the {\em smallest $p$-edge subgraph problem} \cite{Dinitz} (S$p$ES) to hypergraphs, and S$p$ES is a dual problem of the {\em densest $k$-subgraph problem} (D$k$S). The hypergraph extension of D$k$S is the {\em Densest $k$-subhypergraph} problem (D$k$SH) \cite{Chlamtac}. Given a hypergraph $G=(V,E)$ and an integer $k$, the goal of D$k$SH is to find a vertex subset $V'\subseteq V$ of size at most $k$ that contains the largest number of hyperedges, where $V'$ contains a hypergraph $e$ means $e\subseteq V'$. In \cite{Chlamtac}, the authors proved that if there is an $f$-approximation for D$k$SH, then there is an $O(f\log p)$ approximation for Min$p$U.

Although there are a lot of researches on D$k$S and S$p$ES, researches on Min$p$U and D$k$SH are rare. In \cite{Chlamtac1}, Chlamt\'{a}\v{c} et al. pointed out that ``Given the interest in and importance of D$k$S and S$p$ES, it is somewhat surprising that there has been very little exploration of the equivalent problems in {\em hypergraphs}''. Besides, they also pointed out that ``it is widely believed that D$k$SH and Min$p$U do not admit better than a polynomial approximation ratio''. Then a natural question arises: can we obtain better approximation when considering geometric setting? In a geometric setting, the element set is a set of points on the plane, denoted as $V$. Given a set of objects $\mathcal S$ on the plane, the hyperedge set is $E=\{V\cap s\colon s\in\mathcal S\}$, where $V\cap s$ is the set of points contained in object $s\in\mathcal S$. In such a setting, the Min$p$U problem can be viewed as selecting $p$ objects such that the number of points covered by the union of these objects is as small as possible.

In this paper, we consider the geometric Min$p$U problem in which the objects are unit squares (denote this problem as Min$p$U-US), and obtain a $(\frac{1}{1+\varepsilon},4)$-bicriteria approximation algorithm, where an $(\alpha,\beta)$-bicriteria approximation algorithm means that the approximation ratio is $\beta$ and the feasibility is violated by a factor of $\alpha$.

Min$p$U-US has a background in the construction of obnoxious facilities, such as garbage collection stations. The point set $V$ corresponds to the locations of inhabitants. A garbage collection station has some obnoxious effect on those inhabitants in a unit square surrounding it. Suppose $p$ garbage collection stations are planned to be established. It is desired that the number of affected inhabitants is minimized.

\subsection{Related Works}

In 1993, Kortsarz et al. proposed the first approximation algorithm for the {\em densest $k$ subgraph problem} (D$k$S), which achieves approximation ratio $O(n^{2/5})$ \cite{Kortsarz}, where $n$ is the number of vertices. Currently, the best known approximation ratio for D$k$S is $O(n^{1/4+\varepsilon})$ \cite{Bhaskara}.
In \cite{Manurangsi}, it was proved that D$k$S cannot be approximated within factor $O(n^\frac{1}{(\log \log n)^c})$ for some constant $c$ under the ETH assumption and it is widely believed that D$k$S might not have subpolynomial approximation ratio \cite{Chlamtac1}. But in some special case, the approximation might be better. In \cite{Ashahiro}, Ashahiro et al. presented an $O(n/k)$-approximation for D$k$S by a simple greedy strategy. Feige and Langberg \cite{Langberg} showed that
the approximation ratio $n/k$ is achievable using a semidefinite programming. Note that when $k=\Omega(n)$, ratio $n/k$ is a constant. In the case when $k=\Omega(n)$ and the graph is dense (that is, the number of edges is $\Omega(n^2)$), Arora et al. \cite{Arora} presented a polynomial-time approximation scheme (PTAS) for D$k$S using a random sampling technique. Finding a D$k$S remains NP-hard even for chordal graphs and bipartite graphs \cite{Per}. A PTAS was presented for D$k$S in interval graphs by Nonner \cite{Nonner}.

Min$p$U is closely related to the {\em small set vertex expansion problem} (SSVE) \cite{Louis}. Given a graph $G=(V,E)$ and an integer $p$, the goal of SSVE is to select a subset $S\subseteq V$ with $p$ vertices such that $|N_{G}(S)|$ is minimized, where $N_{G}(S)=\{u\in V\setminus S\colon \exists v\in S~\mbox {s.t}~(u,v)\in E\}$ is the neighbor set of $S$. Min$p$U is equivalent to a bipartite SSVE problem (SSBVE) \cite{Chlamtac1}: given an instance of Min$p$U, we can construct a bipartite graph in which the left side represents hyperedges, the right side represents vertices, and there is an edge between a hyperedge and a vertex if the hyperedge contains the vertex. For general $p$, Chlamtac et al. \cite{Chlamtac1} obtained a $(1+\varepsilon,\tilde{O}(\sqrt{n}/\varepsilon))$-bicriteria approximation algorithm for SSVE.

Min$p$U is also closely related to the {\em minimum partial set multi-cover problem} (MinPSMC). Given a hypergraph $G=(V,E)$ and an integer $1\leq k\leq n$ where $n$ is the number of vertices, each vertex $v\in V$ has a covering requirement $r_v$, the goal of MinPSMC is to select the minimum number of hyperedges to fully cover at least $k$ vertices, where a vertex $v$ is fully covered if it belongs to at least $r_v$ selected hyperedges. Ran et al. were the first to study the MinPSMC problem \cite{Ran3}. It was shown that Min$p$U is a special case of the MinPSMC problem, and the MinPSMC problem is at least as hard as the D$k$S problem \cite{Ran2}. Because of this hardness result, Ran et al. \cite{Ran} began to study the MinPSMC problem in a geometric setting. They studied the {\em unit-sqaure} MinPSMC problem, in which every vertex corresponds to a point on the plane and every hyperedge contains those points in a unit square. A PTAS was obtained for the special case when $r_v$ equals the frequency of $v$, that is, $v$ is fully covered only when all those hyperedges containing $v$ are selected. Some terminologies and ideas in \cite{Ran} will be used in this paper. However, it should be noted that the problem we are now studying is much different from that in \cite{Ran}, new insights have to be explored and new techniques have to be developed.

\subsection{Our contributions}

Since the Min$p$U problem in a general setting is very difficult, we study the geometric Min$p$U problem in which hyperedges correspond to unit squares, and present a $(\frac{1}{1+\varepsilon},4)$-bicriteria approximation algorithm for Min$p$U-US, which means that the approximation ratio is a constant while the feasibility is violated by a constant.

We first reveal a relationship between Min$p$U and D$k$SH in terms of bicriteria performance: D$k$SH has a $(\beta,\alpha)$-bicriteria approximation algorithm implies that Min$p$U has an $(\alpha,\beta)$-bicriteria approximation algorithm. So, the study of Min$p$U is transformed into the study of D$k$SH. For D$k$SH-US, we present a $(4,\frac{1}{1+\varepsilon})$-bicriteria approximation algorithm employing a strategy of partition and shifting. The main part is to design an algorithm for a subproblem on a block of constant size. This is done through a dynamic programming over refined grids of the block. For this purpose, we have to ``guess'' the envelope sets  (which form the boundaries for the union of those squares in an optimal solution), as well as those squares completely contained in the union of these envelope sets (note that it is not sufficient to merely guess those squares on the boundaries, and this makes the study much different from previous studies on similar problems). A challenge is: how to ensure that the guesses can be done in polynomial time. For the guessed squares, all those points contained in them should be counted. Another challenge is: how to guarantee that these points are not counted repeatedly.

The organization of the remaining parts of this paper is as follows. Section \ref{sec2} is the main part. In Subsection \ref{sec2.1}, we present the preliminaries of related problems and a relationship between Min$p$U and D$k$SH. In Subsection \ref{sec2.2}, we study a variant of the D$k$SH-US problem in a block, which serves as an auxiliary subproblem, and give a polynomial-time algorithm to compute an exact solution. In Subsection \ref{sec2.3}, local solutions to these subproblems are assembled to yield a $(4,\frac{1}{1+\varepsilon})$-bicriteria  approximate solution to the D$k$SH-US instance, which, by the previous relation, yields a $(\frac{1}{1+\varepsilon},4)$-bicriteria  approximate solution to the Min$p$U-US instance. Section \ref{sec3} concludes the paper with some discussions on future work.

\section{Approximation Algorithm for Min$p$U-US}\label{sec2}

\subsection{Preliminaries}\label{sec2.1}

In this subsection, we formally define the Min$p$U problem and the D$k$SH problem, and reveals a relation between then in terms of bicriteria algorithms.

\begin{definition}[Minimum $p$ Union (Min$p$U)]
{\rm Given a hypergraph $G=(V,E)$ and an integer $1\leq p\leq m$, where $m$ is the number of hyperedges in $E$, the goal of Min$p$U is to find a set of $p$ hyperedges $E'\subseteq E$ to cover the minimum number of vertices, that is, $|E'|=p$ such that $|\bigcup_{e\in E'}e|$ is minimized.}
\end{definition}

\begin{definition}[Densest $k$-Subhypergraph (D$k$SH) ]
{\rm Given a hypergraph $G=(V,E)$ and an integer $1\leq k\leq n$, where $n$ is the number of vertices in $V$, the goal of D$k$SH is to find a subset of vertices $V'\subseteq V$ of size at most $k$ that contains the largest number of hyperedges, in other words, $|E(V')|$ is maximized, where $E(V')=\{e\in E\colon e\subseteq V'\}$.}
\end{definition}

\begin{remark}\label{rem0718-1}
{\rm It should be remarked that we may assume that any feasible solution $V'$ to a D$k$SH instance satisfies
\begin{equation}\label{eq0708-1}
V'=\bigcup_{e\in E(V')}e.
\end{equation}
In fact, by the definition, we have $\bigcup_{e\in E(V')}e\subseteq V'$. Then vertex set $V''=\bigcup_{e\in E(V')}e$ satisfies $|V''|\leq |V'|\leq k$ and $E(V'')=E(V')$, and thus $V''$ is a feasible solution to the D$k$SH instance which is no worse than $V'$. So, it suffices to consider such $V''$.
In view of assumption \eqref{eq0708-1}, the D$k$SH problem is equivalent to finding the maximum number of hyperedges to cover at most $k$ vertices.}
\end{remark}

The next theorem presents a relationship between Min$p$U and D$k$SH.

\begin{theorem}\label{thm1226}
If there is a $(\beta,\alpha)$-bicriteria approximation algorithm for D$k$SH, then there is an $(\alpha,\beta)$-bicriteria approximation algorithm for Min$p$U.
\end{theorem}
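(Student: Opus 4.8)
\textit{Proof proposal.}
The plan is to reduce a given Min$p$U instance to a polynomial number of D$k$SH instances and invoke the assumed bicriteria algorithm as a black box. Let $(G,p)$ be the Min$p$U instance, write $opt$ for its optimal value, and let $E^\ast$ be an optimal solution with $V^\ast=\bigcup_{e\in E^\ast}e$, so that $|E^\ast|=p$ and $|V^\ast|=opt$. Since $opt$ is unknown, I would enumerate a guess $\kappa\in\{1,2,\dots,n\}$ for it; for exactly one choice we have $\kappa=opt$.

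For each guess $\kappa$, run the assumed $(\beta,\alpha)$-bicriteria algorithm for D$k$SH on the instance $(G,\kappa)$, obtaining a vertex set $V'$ with $|V'|\le\beta\kappa$ and $|E(V')|\ge\alpha\cdot OPT_\kappa$, where $OPT_\kappa$ is the optimum of the D$k$SH instance $(G,\kappa)$; then form the hyperedge set $E':=E(V')=\{e\in E\colon e\subseteq V'\}$. The key observation is that for the correct guess $\kappa=opt$ the set $V^\ast$ is feasible for the D$k$SH instance $(G,opt)$ (it has exactly $opt$ vertices) and $E(V^\ast)\supseteq E^\ast$, so $OPT_{opt}\ge|E^\ast|=p$. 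Hence the returned set satisfies $|E'|=|E(V')|\ge\alpha\cdot OPT_{opt}\ge\alpha p$, which meets the relaxed cardinality requirement of an $(\alpha,\beta)$-bicriteria solution for Min$p$U. For the cost, $\bigcup_{e\in E'}e\subseteq V'$ by the definition of $E(V')$ (see also Remark \ref{rem0718-1}), so $|\bigcup_{e\in E'}e|\le|V'|\le\beta\,opt$. Among all guesses $\kappa$ for which $|E(V')|\ge\alpha p$, I would output the one minimizing $|\bigcup_{e\in E'}e|$; since the correct guess already yields a set with $|\bigcup_{e\in E'}e|\le\beta\,opt$, the output respects the bound $\beta\,opt$ as well.

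It then remains to check the running time: the D$k$SH algorithm is called $n$ times, and from $V'$ the set $E(V')$ is computable in polynomial time, so the whole procedure is polynomial whenever the D$k$SH bicriteria algorithm is. (If some guess yields $|E'|>p$, one may keep any $p$ of its hyperedges, which only shrinks the union; this is not needed for the bicriteria guarantee but makes the output a genuine ``at most $p$'' solution.)

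I expect no serious technical obstacle here; the statement is a short containment-and-counting argument. The only points needing care are the bookkeeping of which coordinate of the bicriteria pair is the feasibility violation and which is the approximation ratio, and the argument that the correct guess $\kappa=opt$ produces a D$k$SH instance whose optimum is at least $p$. The ``hard part'', such as it is, lies in seeing that the reduction should go through the union $V^\ast$ of an optimal Min$p$U solution rather than the solution itself.
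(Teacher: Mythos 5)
Your proposal is correct and follows essentially the same route as the paper: enumerate the guess $\kappa=1,\dots,n$, observe that for $\kappa=opt$ the union of an optimal Min$p$U solution is feasible for D$k$SH and contains at least $p$ hyperedges (so the D$k$SH optimum is at least $p$), and conclude via the bicriteria guarantee that the returned vertex set has size at most $\beta\,opt$ while containing at least $\alpha p$ hyperedges. The only cosmetic difference is the selection rule (you minimize the union size over all valid guesses, the paper takes the smallest valid index $\ell$ and uses $|V_\ell|\le\beta\ell\le\beta\,opt$); both are sound.
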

\begin{proof}
In the following, we use APX and OPT to denote the approximate solution and an optimal solution, and use $apx$ and $opt$ to denote their objective values, respectively. A subscript is used to specify which problem the symbol is referring to.

Given a Min$p$U instance $(V,E,p)$, we construct an $(\alpha,\beta)$-bicriteria solution $E'$ as follows.
For each $k=1,\ldots,|V|$, call the $(\beta,\alpha)$-bicriteria approximation algorithm for D$k$SH to yield a vertex set $V_k$. Let $\ell$ be the smallest index satisfying
\begin{equation}\label{eq1226}
|E(V_{\ell})|\geq \alpha p.
\end{equation}
Then $E'=E(V_{\ell})$ is a set of hyperedges violating the feasibility of the Min$p$U instance by a factor of $\alpha$. Next, we show that $E'$ approximates the optimal value of the Min$p$U instance within factor $\beta$. By Remark \ref{rem0718-1}, this is equivalent to show that
\begin{equation}\label{eq0125-1}
|V_{\ell}|\leq \beta opt_{MinpU}.
\end{equation}

Since $V_{\ell}$ is computed by a $(\beta,\alpha)$-bicriteria algorithm for the D$\ell$SH instance, we have
\begin{equation}\label{eq0125}
|V_{\ell}|\leq \beta\ell.
\end{equation}
Consider the computed vertex set $V_k$ of the D$k$SH instance for $k=opt_{MinpU}$, we have
\begin{equation}\label{eq1226-2}
|E(V_k)|\geq \alpha opt_{DkSH}.
\end{equation}
Let $V_{MinpU}=\bigcup_{e\in OPT_{MinpU}}e$. Then $V_{MinpU}$ is a feasible solution to the D$k$SH instance containing at least $p$ hyperedges. Hence
\begin{equation}\label{eq1226-3}
opt_{DkSH}\geq p.
\end{equation}
Combining inequalities \eqref{eq1226-2} and \eqref{eq1226-3}, we have $|E(V_k)|\geq \alpha p$. By the choice of $\ell$, we have
\begin{equation}\label{eq1226-1}
\ell\leq opt_{MinpU}.
\end{equation}
Combining inequalities \eqref{eq0125} and \eqref{eq1226-1}, inequality \eqref{eq0125-1} is proved, nd the theorem follows.
\end{proof}

When $V$ is a set of points on the plane and $E$ corresponds to a set $\mathcal S$ of unit squares on the plane (that is, every $e\in E$ consists of all those points in a unit square $s_e\in\mathcal S$ corresponding to $e$), then the above problems are called {\em unit square Min$p$U} (Min$p$U-US) and {\em unit square D$k$SH} (D$k$SH-US), respectively.

By Theorem \ref{thm1226}, to design a $(\frac{1}{1+\varepsilon},4)$-bicriteria algorithm for Min$p$U-US, it suffices to design a $(4,\frac{1}{1+\varepsilon})$-bicriteria algorithm for D$k$SH-US. We employ the partition and shifting strategy: divide the area containing all the points into blocks of constant side-length, solve subproblems on the blocks, and then assemble the solutions to the subproblems into a feasible solution to the original problem. A crucial step is to design an algorithm for the subproblem on a block. It is done by a dynamic programming method.

\subsection{Algorithm for Subproblem in a Block}\label{sec2.2}

Let $b$ be a block of side-length $a\times a$, where $a$ is a constant, and $\mathcal S_b$ be the set of unit squares intersecting $b$. For simplicity of notation, we still write $\mathcal S$ for $\mathcal S_b$. For a subcollection $\mathcal S'\subseteq\mathcal S$, those points belonging to at least one unit square of $\mathcal S'$ are said to be {\em covered} by $\mathcal S'$. In this subsection, we consider the D$k$SH-US problem on block $b$ (denote the problem as D$k$SH-US$_b$). In view of Remark \ref{rem0718-1}, the problem can be stated as follows: for an integer $k_b$, find a set of unit squares $\mathcal S'\subseteq \mathcal S$ such that the number of points covered by $\mathcal S'$ is at most $k_b$ and subject to this constraint, $|\mathcal S'|$ is as large as possible.

Divide $b$ into grids of side-length 1 (for simplicity of statement, assume that the side-length $a$ is an integer). For any subcollection of unit squares $\mathcal S'\subseteq \mathcal S$, denote by $U(\mathcal S')=\bigcup_{s\in \mathcal S'}s$ the union region of $\mathcal S'$. For simplicity of notation, we also use $U(\mathcal S')$ to denote the set of points covered by $\mathcal S'$.

\begin{definition}[envelope]
{\rm For a set of unit squares $\mathcal S'$ and a grid point $g$, let $\mathcal S'_g$ be the set of unit squares in $\mathcal S'$ containing $g$. Those unit squares appearing on the boundary of $U(\mathcal S'_g)$ are called the {\em envelope-squares} of $\mathcal S'$ at $g$. The union region of those envelope-squares is called the {\em envelope of $\mathcal S'$ at $g$}.}
\end{definition}

It is assumed that the positions of the points and the positions of the unit squares are generic so that no point lies on the boundary of a unit square and no square have the same $x$-coordinate or $y$-coordinate.
As a consequence, we may assume that the unit squares are open, and thus {\em every unit square  belongs to exactly one grid point}.

For simplicity of statements, we use $s_g^b$ and $s_g^e$ to denote two {\em virtual squares} whose positions are to the left and to the right of all squares associated with grid point $g$, call them the {\em beginning square} and the {\em ending square} of $g$, respectively.

\subsubsection{The idea underlying the dynamic programming}

In this subsection, we use a series of examples to develop the ideas behind the dynamic programming, in the hope that the complicated symbols in the next subsection will not seem too abruptly.

To find out an optimal solution $\mathcal O^*$ to an D$k$SH-US$_b$ instance, it suffices to find out all unit squares in $\mathcal O^*_g$ for every grid point $g$, where $\mathcal O^*_g$ is the set of unit squares in $\mathcal O^*$ that contains grid point $g$. To find out $\mathcal O^*_g$, it suffices to find out the envelope of $\mathcal O^*$ at $g$, and all those unit squares completely contained in the envelope. The envelope can be discovered by moving a vertical line $\ell$ (called {\em sweep line}) from left to right, tracing the highest and the lowest squares it meet during the movement. Consider the instance in Fig. \ref{fig0407-1} $(a)$ for an illustration.
Denote by $s_h$ and $s_l$ the highest and the lowest unit squares met by $\ell$ when it is at some position. Moving $\ell$ rightward, the tuples $(s_h,s_l)$ met by $\ell$ are sequentially $(s_g^b,s_g^b),(s_2,s_2),(s_2,s_3),(s_2,s_4),(s_1,s_4),(s_1,s_1),(s_g^e,s_g^e)$. In order to find out all those unit squares completely contained in the envelope, we add an element $s_{next}$ to the tuple indicating the next square to be met by the sweep line.
For the above instance, the modified sequence of tuples $(s_h,s_l;s_{next})$, only considering the movement in the left side of grid point $g$, are $(s_g^b,s_g^b;s_2),(s_2,s_2;s_3),(s_2,s_3;s_5),(s_2,s_3;s_4),(s_2,s_4;s_1),(s_2,s_4;s_g^e)$.

\begin{figure}[h]
\begin{center}
\begin{picture}(100,100)
\put(50,0){\line(0,1){100}}\put(0,50){\line(1,0){100}}
\put(50,50){\circle*{3}}\put(51,54){$g$}
\put(40,35){\framebox(50,50)}\put(10,40){\framebox(50,50)}\put(17,20){\framebox(50,50)}\put(35,13){\framebox(50,50)}
\put(22,30){\framebox(50,50)}
\put(91,85){$s_1$}\put(3,93){$s_2$}\put(8,15){$s_3$}\put(85,8){$s_4$}\put(18,83){$s_5$}
\put(30,0){\line(0,1){102}}\put(31,95){\small $\ell$}
{\linethickness{0.3mm} \put(10,40){\line(0,1){50}}\put(10,90){\line(1,0){50}}
\put(60,90){\line(0,-1){5}}\put(60,85){\line(1,0){30}}\put(90,85){\line(0,-1){50}}
\put(90,35){\line(-1,0){5}}\put(85,35){\line(0,-1){22}}\put(85,13){\line(-1,0){50}}
\put(35,13){\line(0,1){7}}\put(35,20){\line(-1,0){18}}\put(17,20){\line(0,1){20}}\put(17,40){\line(-1,0){7}} }
\put(42,-8){(a)}
\end{picture}
\hskip 2cm
\begin{picture}(100,110)
\put(50,0){\line(0,1){100}}\put(0,50){\line(1,0){100}}
\put(50,50){\circle*{3}}\put(51,54){$g$}
\put(27,0){\line(0,1){102}}\put(28,95){\small $\ell^{\it left}$}
\put(77,0){\line(0,1){102}}\put(78,95){\small $\ell^{\it right}$}
\put(40,35){\framebox(50,50)}\put(10,40){\framebox(50,50)}\put(17,20){\framebox(50,50)}\put(35,13){\framebox(50,50)}
\put(22,30){\framebox(50,50)}
\put(91,85){$s_1$}\put(3,93){$s_2$}\put(8,15){$s_3$}\put(85,8){$s_4$}\put(16,83){$s_5$}
\put(42,-8){(b)}
\end{picture}
\vskip 0.2cm \caption{$(a)$ Illustration of using sweep line $\ell$ to trace squares. For the position of $\ell$, the highest square is $s_2$, the lowest square is $s_3$, and the next square is $s_4$. The blackened lines mark the envelope. $(b)$ Illustration of using two symmetric sweep lines $\ell^{left}$ and $\ell^{right}$.}\label{fig0407-1}
\end{center}
\end{figure}
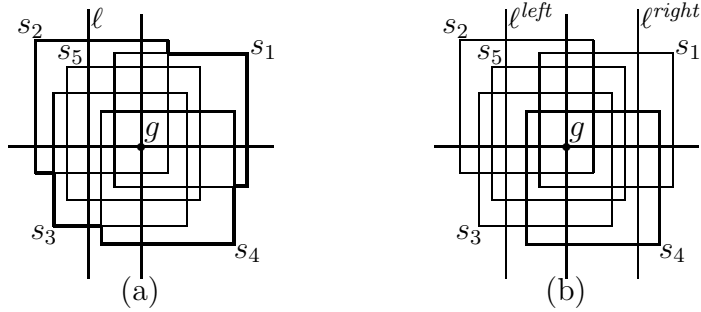

Accompanying the above tracing process, every time a square is recorded, ``all'' points contained in it should be counted. One crucial thing that should be paid attention to is: how to avoid repetition in the counting.
The idea is that when a new square is met, only those points ``newly'' covered by the new square are counted. That is, when moving from $(s_h^{old},s_l^{old};s_{next}^{old})$ to $(s_h^{new},s_l^{new};s_{next}^{new})$, we count those points in $(s_h^{new}\cup s_l^{new}\cup s_{next}^{new})\setminus (s_h^{old}\cup s_l^{old}\cup s_{next}^{old})$. However, such a technique cannot guarantee that every point is only counted once. Consider the instance in Fig. \ref{fig0421-1} for an example, in which newly covered points lie in the shaded areas. The tuples are sequentially $(s_g^b,s_g^b;s_1)$, $(s_1,s_1;s_2)$, $(s_2,s_1;s_3)$, $(s_3,s_1;s_4)$, $(s_4,s_1;s_5)$ and $(s_4,s_1;s_g^e)$ (again only consider the movement of the sweep line in the left-side of the grid point). During the movement, points counted are in the shaded areas $s_1$, $s_2\setminus s_1$, $s_3\setminus (s_1\cup s_2)$, $s_4\setminus (s_1\cup s_2\cup s_3)$, $s_5\setminus (s_1\cup s_3\cup s_4)$ and $s_g^e\setminus (s_1\cup s_4\cup s_5)=\emptyset$. Notice that points in the dark shaded area of $(b)$ and $(e)$ are counted twice. The reason for such a repetition is because when it is the time to count new points in the new square $s_5$, square $s_2$ is already ``forgotten'' by the previous tuple, and thus some points in $s_2\cap s_5$ are re-counted.

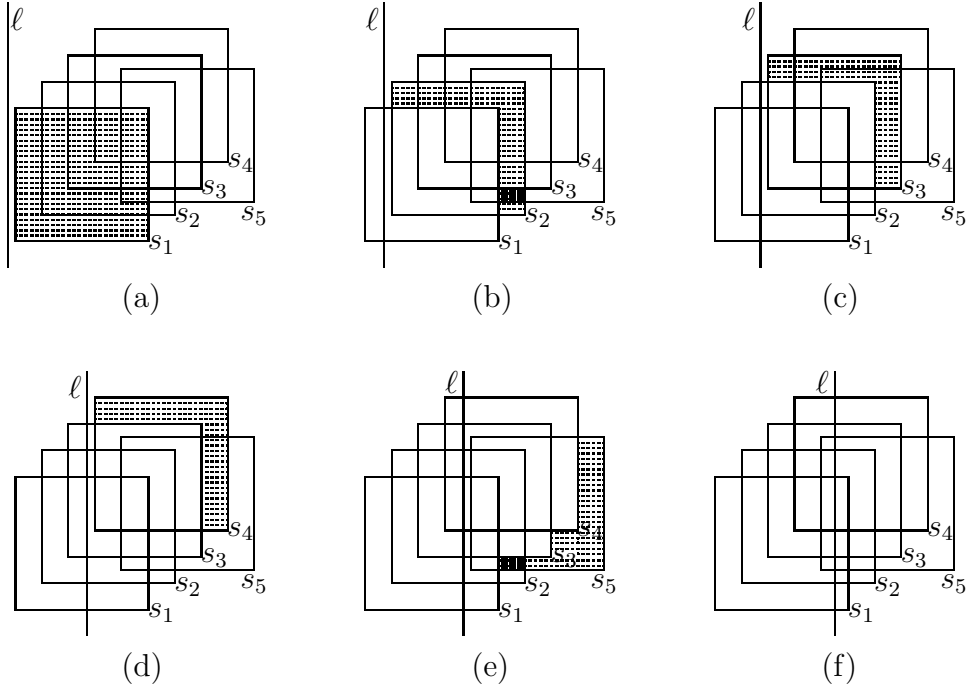
\begin{figure}[h]
\begin{center}
\hskip 0cm\begin{picture}(100,110)
\put(-3,10){\line(0,1){100}}\put(-2,100){$\ell$}
\put(0,20){\framebox(50,50)}\put(10,30){\framebox(50,50)}
\put(20,40){\framebox(50,50)}\put(30,50){\framebox(50,50)}\put(40,35){\framebox(50,50)}
\put(50,17){$s_1$}\put(60,27){$s_2$}\put(70,38){$s_3$}\put(80,48){$s_4$}\put(85,28){$s_5$}
\multiput(0,22)(0,4){12}{\dashbox(50,2)}
\put(40,-5){(a)}
\end{picture}
\hskip 1.cm\begin{picture}(100,110)
\put(7,10){\line(0,1){100}}\put(0,100){$\ell$}
\put(0,20){\framebox(50,50)}\put(10,30){\framebox(50,50)}
\put(20,40){\framebox(50,50)}\put(30,50){\framebox(50,50)}\put(40,35){\framebox(50,50)}
\put(50,17){$s_1$}\put(60,27){$s_2$}\put(70,38){$s_3$}\put(80,48){$s_4$}\put(85,28){$s_5$}
\multiput(10,76)(0,-4){2}{\dashbox(50,2)}
\multiput(50.2,68)(0,-4){10}{\dashbox(10,2)}
\multiput(50.5,35)(1,0){10}{\line(0,1){5}}
\put(40,-5){(b)}
\end{picture}
\hskip 1.cm\begin{picture}(100,110)
\put(17,10){\line(0,1){100}}\put(10,100){$\ell$}
\put(0,20){\framebox(50,50)}\put(10,30){\framebox(50,50)}
\put(20,40){\framebox(50,50)}\put(30,50){\framebox(50,50)}\put(40,35){\framebox(50,50)}
\put(50,17){$s_1$}\put(60,27){$s_2$}\put(70,38){$s_3$}\put(80,48){$s_4$}\put(85,28){$s_5$}
\multiput(20,86)(0,-4){2}{\dashbox(50,2)}
\multiput(60.2,78)(0,-4){10}{\dashbox(10,2)}
\put(40,-5){(c)}
\end{picture}
\vskip 1cm\hskip 0.cm\begin{picture}(100,110)
\put(27,10){\line(0,1){100}}\put(20,100){$\ell$}
\put(0,20){\framebox(50,50)}\put(10,30){\framebox(50,50)}
\put(20,40){\framebox(50,50)}\put(30,50){\framebox(50,50)}\put(40,35){\framebox(50,50)}
\put(50,17){$s_1$}\put(60,27){$s_2$}\put(70,38){$s_3$}\put(80,48){$s_4$}\put(85,28){$s_5$}
\multiput(30,96)(0,-4){2}{\dashbox(50,2)}
\multiput(70.2,88)(0,-4){10}{\dashbox(10,2)}
\put(40,-5){(d)}
\end{picture}
\hskip 1.cm\begin{picture}(100,110)
\put(37,10){\line(0,1){100}}\put(30,102){$\ell$}
\put(0,20){\framebox(50,50)}\put(10,30){\framebox(50,50)}
\put(20,40){\framebox(50,50)}\put(30,50){\framebox(50,50)}\put(40,35){\framebox(50,50)}
\put(50,17){$s_1$}\put(60,27){$s_2$}\put(70,38){$s_3$}\put(80,48){$s_4$}\put(85,28){$s_5$}
\multiput(50.2,36.5)(0,-4){1}{\dashbox(40,2)}
\multiput(70.2,47)(0,-4){2}{\dashbox(20,2)}
\multiput(80.2,81)(0,-4){11}{\dashbox(10,2)}
\multiput(50.5,35)(1,0){10}{\line(0,1){5}}
\put(40,-5){(e)}
\end{picture}
\hskip 1.cm\begin{picture}(100,110)
\put(45,10){\line(0,1){100}}\put(38,102){$\ell$}
\put(0,20){\framebox(50,50)}\put(10,30){\framebox(50,50)}
\put(20,40){\framebox(50,50)}\put(30,50){\framebox(50,50)}\put(40,35){\framebox(50,50)}
\put(50,17){$s_1$}\put(60,27){$s_2$}\put(70,38){$s_3$}\put(80,48){$s_4$}\put(85,28){$s_5$}
\put(40,-5){(f)}
\end{picture}
\vskip 0.2cm \caption{An illustration in which using one sweep line cannot avoid repetition of counting. }\label{fig0421-1}
\end{center}
\end{figure}

This example shows that the above strategy cannot avoid counting points repeatedly. To solve such a problem, we use two {\em symmetric} sweep lines which move synchronously to trace the squares, one for the left side of grid point $g$ and the other for the right side. To be more concrete, assume that the grid point $g$ has $x$-coordinate $x_g$. When the right sweep line $\ell^{right}$ locates at coordinate $x_g+x$, the location for the left sweep line $\ell^{left}$ is then at $x_g+x-1$ (see Fig. \ref{fig0407-1} $(b)$). Each tuple now contains five elements $(s_{hl},s_{ll},s_{hr},s_{lr};s_{next})$, where $(s_{hl},s_{ll})$ records the highest and the lowest squares met by $\ell^{left}$, $(s_{hr},s_{lr})$ records the highest and the lowest squares met by $\ell^{right}$, and $s_{next}$ is the next square to be met by $\ell^{left}$ (by the symmetric assumption on the sweep lines, the right boundary of $s_{next}$ is the next position to be met by $\ell^{right}$). Denote by $S(g)$ the 5-tuple at grid point $g$. Without ambiguity, we also use $S(g)$ to denote the set of squares in the 5-tuple at $g$. When the sweep lines move from $S^{old}(g)$ to $S^{new}(g)$, those points in $S^{new}(g)\setminus S^{old}(g)$ are counted.

For the instance in Fig. \ref{fig0421-1}, the new technique yields counting in Fig. \ref{fig0422-1}. The sequence of 5-tuples are $(s_g^b,s_g^b,s_4,s_1;s_1)$, $(s_1,s_1,s_4,s_2;s_2)$, $(s_2,s_1,s_4,s_5;s_3)$, $(s_3,s_1,s_4,s_5;s_4)$, $(s_4,s_1,$ $s_5,s_5;s_5)$ and $(s_4,s_1,s_g^e,s_g^e;s_g^e)$. The points counted during the process are in the shaded areas $s_1\cup s_4$, $s_2\setminus (s_1\cup s_4)$, $(s_3\cup s_5)\setminus (s_1\cup s_2\cup s_4)$ and $\emptyset$, $\emptyset$, $\emptyset$ (notice that for $(d)$, $S^{new}(g)\setminus S^{old}(g)=(s_1\cup s_3\cup s_4\cup s_5)\setminus (s_1\cup s_2\cup s_3\cup s_4\cup s_5)=\emptyset$, and similar argument for $(e)$ and $(f)$). Using such a recording method, every region in the envelope is counted exactly once. In particular, when $s_5$ comes into the sight, square $s_2$ is remembered by the previous tuple, this is why repetition can be avoided.

\vskip 0.2cm \begin{figure}[h]
\begin{center}
\begin{picture}(100,110)
\put(-3,5){\line(0,1){105}}\put(-2,105){\small $\ell^{\it left}$}
\put(47,5){\line(0,1){105}}\put(48,105){\small $\ell^{\it right}$}
\put(0,20){\framebox(50,50)}\put(10,30){\framebox(50,50)}
\put(20,40){\framebox(50,50)}\put(30,50){\framebox(50,50)}\put(40,35){\framebox(50,50)}
\put(50,17){$s_1$}\put(60,27){$s_2$}\put(70,38){$s_3$}\put(80,48){$s_4$}\put(85,28){$s_5$}
\multiput(0,22)(0,4){12}{\dashbox(50,2)}
\multiput(30,52)(0,4){12}{\dashbox(50,2)}
\put(40,-5){(a)}
\end{picture}
\hskip 1.cm\begin{picture}(100,110)
\put(7,5){\line(0,1){105}}\put(8,105){\small $\ell^{\it left}$}
\put(57,5){\line(0,1){105}}\put(58,105){\small $\ell^{\it right}$}
\put(0,20){\framebox(50,50)}\put(10,30){\framebox(50,50)}
\put(20,40){\framebox(50,50)}\put(30,50){\framebox(50,50)}\put(40,35){\framebox(50,50)}
\put(47,13){$s_1$}\put(60,27){$s_2$}\put(70,38){$s_3$}\put(80,48){$s_4$}\put(85,28){$s_5$}
\multiput(10,76)(0,-4){2}{\dashbox(20,2)}
\multiput(50,47)(0,-4){5}{\dashbox(10,2)}
\put(40,-5){(b)}
\end{picture}
\hskip 1.cm\begin{picture}(100,110)
\put(17,5){\line(0,1){105}}\put(18,105){\small $\ell^{\it left}$}
\put(67,5){\line(0,1){105}}\put(68,105){\small $\ell^{\it right}$}
\put(0,20){\framebox(50,50)}\put(10,30){\framebox(50,50)}
\put(20,40){\framebox(50,50)}\put(30,50){\framebox(50,50)}\put(40,35){\framebox(50,50)}
\put(50,17){$s_1$}\put(57,24){$s_2$}\put(70,38){$s_3$}\put(80,48){$s_4$}\put(85,28){$s_5$}
\multiput(20,86)(0,-4){2}{\dashbox(10,2)}
\multiput(80.2,81)(0,-4){9}{\dashbox(10,2)}
\multiput(60,45)(0,-4){3}{\dashbox(30,2.2)}
\put(40,-5){(c)}
\end{picture}
\vskip 1cm\hskip 0.cm\begin{picture}(100,110)
\put(27,5){\line(0,1){105}}\put(28,105){\small $\ell^{\it left}$}
\put(77,5){\line(0,1){105}}\put(68,105){\small $\ell^{\it right}$}
\put(0,20){\framebox(50,50)}\put(10,30){\framebox(50,50)}
\put(20,40){\framebox(50,50)}\put(30,50){\framebox(50,50)}\put(40,35){\framebox(50,50)}
\put(50,17){$s_1$}\put(60,27){$s_2$}\put(70,38){$s_3$}\put(80,48){$s_4$}\put(85,28){$s_5$}
\put(40,-5){(d)}
\end{picture}
\hskip 1.cm\begin{picture}(100,110)
\put(37,5){\line(0,1){105}}\put(38,105){\small $\ell^{\it left}$}
\put(87,5){\line(0,1){105}}\put(88,105){\small $\ell^{\it right}$}
\put(0,20){\framebox(50,50)}\put(10,30){\framebox(50,50)}
\put(20,40){\framebox(50,50)}\put(30,50){\framebox(50,50)}\put(40,35){\framebox(50,50)}
\put(50,17){$s_1$}\put(60,27){$s_2$}\put(70,38){$s_3$}\put(77,44){$s_4$}\put(88,28){$s_5$}
\put(40,-5){(e)}
\end{picture}
\hskip 1.cm\begin{picture}(100,110)
\put(47,5){\line(0,1){105}}\put(48,105){\small $\ell^{\it left}$}
\put(97,5){\line(0,1){105}}\put(98,105){\small $\ell^{\it right}$}
\put(0,20){\framebox(50,50)}\put(10,30){\framebox(50,50)}
\put(20,40){\framebox(50,50)}\put(30,50){\framebox(50,50)}\put(40,35){\framebox(50,50)}
\put(50,17){$s_1$}\put(60,27){$s_2$}\put(70,38){$s_3$}\put(80,48){$s_4$}\put(85,28){$s_5$}
\put(40,-5){(f)}
\end{picture}
\vskip 0.2cm \caption{Illustration for counting points using two sweep lines.}\label{fig0422-1}
\end{center}
\end{figure}
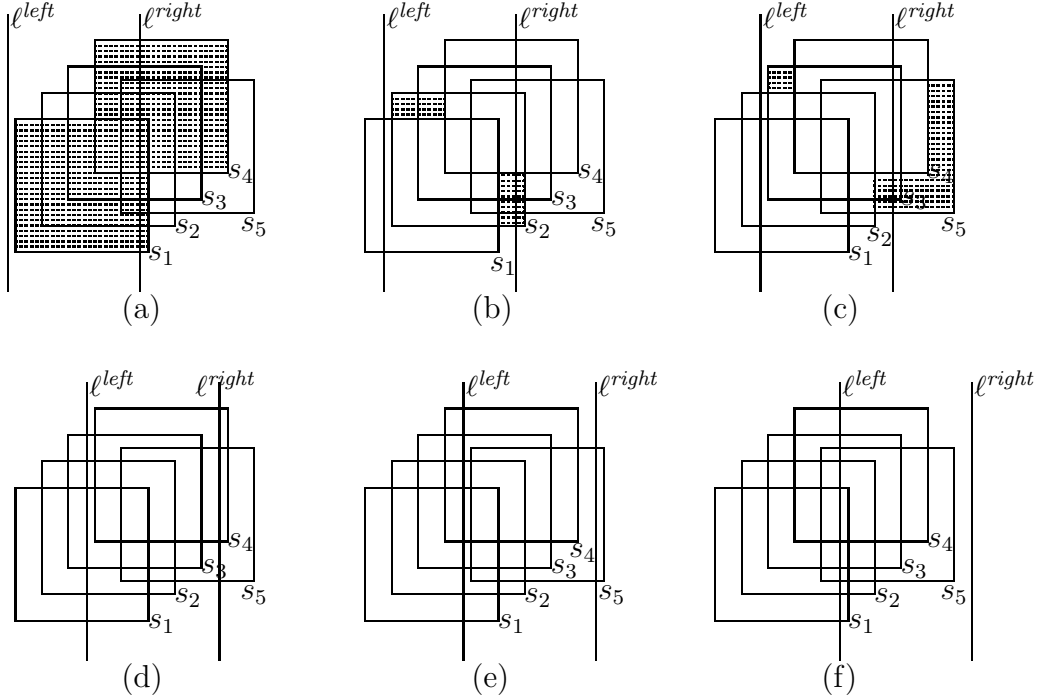

The above ideas are only illustrated by considering ``one'' grid point. One may be wondering what if there are interactions among several grid points? The idea is, for each grid point $g$, two sweep lines $\ell^{left}_g,\ell^{right}_g$ are used, and all sweep lines $\{(\ell^{left}_g,\ell^{right}_g)\}_g$ move synchronously rightwards. For a position of $\{(\ell^{left}_g,\ell^{right}_g)\}_g$, denote $S=\bigcup_gS(g)$. When the sweep lines move from $S^{old}$ to $S^{new}$, those points in $S^{new}\setminus S^{old}$ are counted. An illustration is given in Fig. \ref{fig0423-1} (we only draw out the figures for the first three positions of the sweep lines), and the 5-tuples are listed in Table \ref{tab0422-1}. A star in the table indicates that this tuple is different from its predecessor (note that for every movement of sweep lines, only one tuple is different from its predecessor). During the movement, the counted areas are sequentially $s_1\cup s_3\cup s_4\cup s_6 \cup s_7$, $s_2\setminus (s_1\cup s_3\cup s_4\cup s_6 \cup s_7)$, $s_5\setminus (s_1\cup s_2\cup s_3\cup s_4\cup s_6 \cup s_7)$ and $\emptyset,\emptyset,\emptyset,\emptyset,\emptyset$. Notice that when the sweep lines move from the positions in $(a)$ to the positions in $(b)$, if we are only considering grid point $g_1$, then both the shaded area and the two boxes bounded by the darkened lines are newly covered. But considering the interactions of those squares associated with $g_2$ and $g_3$, only the shaded area is newly covered. The same argument for the figure in $(c)$.

\begin{figure}[htbp]
\begin{center}
\hskip -0.6cm\begin{picture}(140,180)
\multiput(70,6)(50,0){2}{\multiput(0,0)(0,5){35}{\line(0,1){3}}}
\multiput(20,55)(0,50){2}{\multiput(0,0)(5,0){27}{\line(1,0){3}}}
\put(60,48){$g_1$}\put(60,109){$g_2$}\put(110,48){$g_3$}\put(110,109){$g_4$}
\put(25,9){\line(0,1){165}}\put(26,165){\small $\ell_{g_1,g_2}^{\it left}$}
\put(75,9){\line(0,1){165}}\put(76,170){\small $\ell_{g_1,g_2}^{\it right}$}\put(76,155){\small $=\ell_{g_3,g_4}^{\it left}$}
\put(125,9){\line(0,1){165}}\put(126,165){\small $\ell_{g_3,g_4}^{\it right}$}
\put(30,26){\framebox(50,50)}\put(42,36){\framebox(50,50)}\put(54,46){\framebox(50,50)}\put(90,12){\framebox(50,50)}
\put(36,66){\framebox(50,50)}\put(48,82){\framebox(50,50)}\put(60,100){\framebox(50,50)}
\put(76,19){$s_1$}\put(88,29){$s_2$}\put(100,39){$s_3$}\put(132,6){$s_7$}
\put(30,119){$s_4$}\put(42,135){$s_5$}\put(55,153){$s_6$}
\multiput(30,28)(0,4){12}{\dashbox(50,2)}\multiput(54,48)(0,4){12}{\dashbox(50,2)}\multiput(90,14)(0,4){12}{\dashbox(50,2)}
\multiput(36,68)(0,4){12}{\dashbox(50,2)}\multiput(60,102)(0,4){12}{\dashbox(50,2)}
\put(62,-5){(a)}
\end{picture}
\hskip 0.3cm\begin{picture}(140,180)
\multiput(70,6)(50,0){2}{\multiput(0,0)(0,5){35}{\line(0,1){3}}}
\multiput(20,55)(0,50){2}{\multiput(0,0)(5,0){27}{\line(1,0){3}}}
\put(60,48){$g_1$}\put(60,109){$g_2$}\put(110,48){$g_3$}\put(110,109){$g_4$}
\put(33,9){\line(0,1){165}}\put(34,165){\small $\ell_{g_1,g_2}^{\it left}$}
\put(83,9){\line(0,1){165}}\put(84,170){\small $\ell_{g_1,g_2}^{\it right}$}\put(84,155){\small $=\ell_{g_3,g_4}^{\it left}$}
\put(133,9){\line(0,1){165}}\put(134,165){\small $\ell_{g_3,g_4}^{\it right}$}
\put(30,26){\framebox(50,50)}\put(42,36){\framebox(50,50)}\put(54,46){\framebox(50,50)}\put(90,12){\framebox(50,50)}
\put(36,66){\framebox(50,50)}\put(48,82){\framebox(50,50)}\put(60,100){\framebox(50,50)}
\put(72,19){$s_1$}\put(90,29){$s_2$}\put(100,39){$s_3$}\put(132,6){$s_7$}
\put(33,119){$s_4$}\put(42,135){$s_5$}\put(49,148){$s_6$}
\multiput(80.2,37.4)(0,2.9){3}{\dashbox(9.6,1.4)}\put(90,36.6){\linethickness{0.3mm}\framebox(2,8.6)}
\put(42.6,76.8){\linethickness{0.3mm}\framebox(10.5,8.6)}
\put(62,-5){(b)}
\end{picture}
\hskip 0.3cm\begin{picture}(140,180)
\multiput(70,6)(50,0){2}{\multiput(0,0)(0,5){35}{\line(0,1){3}}}
\multiput(20,55)(0,50){2}{\multiput(0,0)(5,0){27}{\line(1,0){3}}}
\put(60,48){$g_1$}\put(60,109){$g_2$}\put(110,48){$g_3$}\put(110,109){$g_4$}
\put(38,9){\line(0,1){165}}\put(40,165){\small $\ell_{g_1,g_2}^{\it left}$}
\put(88,9){\line(0,1){165}}\put(90,170){\small $\ell_{g_1,g_2}^{\it right}$}\put(90,155){\small $=\ell_{g_3,g_4}^{\it left}$}
\put(138,9){\line(0,1){165}}\put(140,165){\small $\ell_{g_3,g_4}^{\it right}$}
\put(30,26){\framebox(50,50)}\put(42,36){\framebox(50,50)}\put(54,46){\framebox(50,50)}\put(90,12){\framebox(50,50)}
\put(36,66){\framebox(50,50)}\put(48,82){\framebox(50,50)}\put(60,100){\framebox(50,50)}
\put(76,19){$s_1$}\put(90,29){$s_2$}\put(100,39){$s_3$}\put(138,6){$s_7$}
\put(27,119){$s_4$}\put(42,135){$s_5$}\put(49,148){$s_6$}
\multiput(48,118)(0,3.6){4}{\dashbox(11.8,1.8)}
\multiput(86.2,97.5)(0,4){1}{\dashbox(12,1.2)}
\put(86.8,82.2){\linethickness{0.3mm}\framebox(11.,13.5)}
\put(62,-5){(c)}
\end{picture}

\vskip 0.2cm \caption{Illustration of counting with interactions among several grid points. Shaded areas indicate newly covered regions.}\label{fig0423-1}
\end{center}
\end{figure}
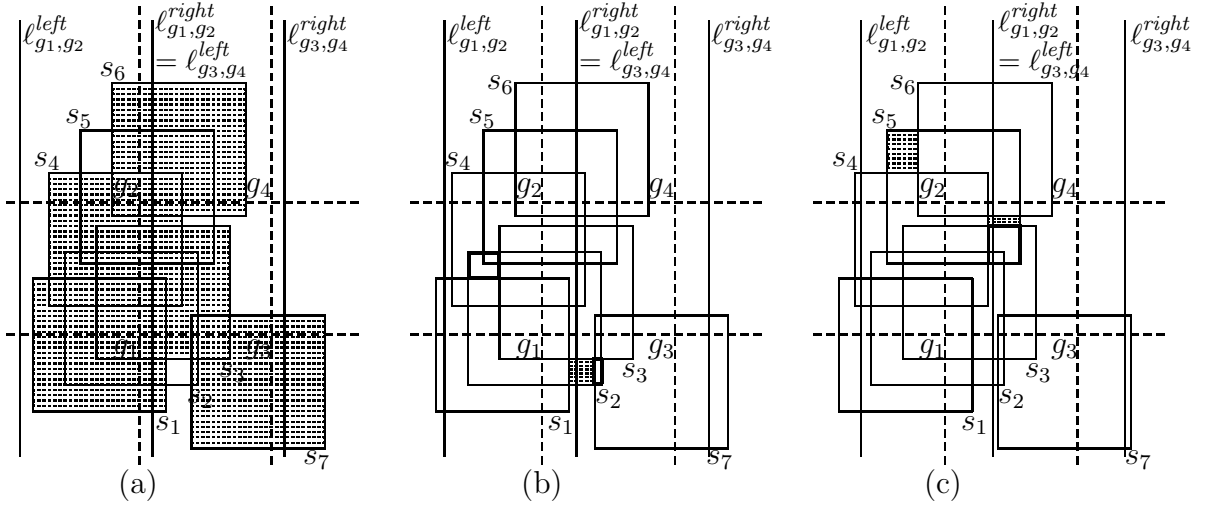

\begin{table}[h]
\centering
\begin{tabular}{|c|c|c|c|}
\hline
 & $g_1$ & $g_2$ & $g_3$ \\
\hline
$(a)$ & $(s_{g_1}^b,s_{g_1}^b,s_3,s_1;s_1)$ & $(s_{g_2}^b,s_{g_2}^b,s_6,s_4;s_4)$ & $(s_{g_3}^b,s_{g_3}^b,s_7,s_7;s_7)$ \\
$(b)$ & * $(s_1,s_1,s_3,s_2;s_2)$ & $(s_{g_2}^b,s_{g_2}^b,s_6,s_4;s_4)$ & $(s_{g_3}^b,s_{g_3}^b,s_7,s_7;s_7)$ \\
$(c)$ & $(s_1,s_1,s_3,s_2;s_2)$ & * $(s_4,s_4,s_6,s_5;s_5)$ & $(s_{g_3}^b,s_{g_3}^b,s_7,s_7;s_7)$ \\
& $(s_1,s_1,s_3,s_2;s_2)$ & $(s_4,s_4,s_6,s_5;s_5)$ & * $(s_7,s_7,s_{g_3}^e,s_{g_3}^e;s_{g_3}^e)$ \\
& * $(s_2,s_1,s_3,s_3;s_3)$ & $(s_4,s_4,s_6,s_5;s_5)$ & $(s_7,s_7,s_{g_3}^e,s_{g_3}^e;s_{g_3}^e)$ \\
& $(s_2,s_1,s_3,s_3;s_3)$ & * $(s_5,s_4,s_6,s_6;s_6)$ & $(s_7,s_7,s_{g_3}^e,s_{g_3}^e;s_{g_3}^e)$ \\
& * $(s_3,s_1,s_{g_1}^e,s_{g_1}^e;s_{g_1}^e)$ & $(s_5,s_4,s_6,s_6;s_6)$ & $(s_7,s_7,s_{g_3}^e,s_{g_3}^e;s_{g_3}^e)$ \\
& $(s_3,s_1,s_{g_1}^e,s_{g_1}^e;s_{g_1}^e)$ & * $(s_6,s_4,s_{g_2}^e,s_{g_2}^e;s_{g_2}^e)$ & $(s_7,s_7,s_{g_3}^e,s_{g_3}^e;s_{g_3}^e)$\\
\hline
\end{tabular}
  \caption{The 5-tuples for the example of Fig. \ref{fig0423-1}.}\label{tab0422-1}
\end{table}

These ideas are formally described in the following subsection.

\subsubsection{The Dynamic Programming}\label{sec0708-1}

Let $\ell(x)$ be the sweep line with the horizontal coordinate $x$, and $x(\ell)$ be the $x$-coordinate of sweep line $\ell$. For a unit square $s$, use $x(s)$ to denote the $x$-coordinate of the right boundary and $y(s)$ to denote the $y$-coordinate of the upper boundary of $s$.

\begin{definition}[configuration]\label{def0131}
{\rm A {\em configuration} in block $b$ consists of a 3-tuple $(S,\tilde k, x)$, where $S=\bigcup_{g}S(g)$, $S(g)=(s_{hl}(g),s_{ll}(g),s_{hr}(g),s_{lr}(g);s_{next}(g))$ is a 5-tuple of unit squares associated with grid point $g$, $0\leq \tilde k\leq k_b$ is an integer, and $x$ is a real number in $[0,1]$. For each grid point $g$, let $\ell_g^{right}$ and $\ell_g^{left}$ be two vertical lines with $x(\ell_g^{right})=x(g)+x$ and $x(\ell_g^{left})=x(\ell_g^{right})-1$. The 5-tuple at $g$ satisfies $y(s_{hl}(g))\geq y(s_{ll}(g))$, $y(s_{hr}(g))\geq y(s_{lr}(g))$, $y(s_{lr}(g))\leq y(s_{next}(g))\leq y(s_{hr}(g))$, $\max\{x(s_{hl}(g)),x(s_{ll}(g))\}< x(\ell_g^{right}) < \min\{x(s_{hr}(g)),x(s_{lr}(g)),x(s_{next}(g))\}$. Furthermore, there exists a grid point $g$ such that $x(\ell_g^{right})=x(s_{next}(g))-\varepsilon$, where $\varepsilon$ is a sufficiently small constant.}
\end{definition}

\begin{remark}\label{rem0719-3}
{\rm The conditions in the definition of configuration are satisfied by any set of guessed squares. For example, The reason for $\max\{x(s_{hl}(g)),x(s_{ll}(g))\}< x(\ell_g^{right})$ is because: $s_{hl}(g)$ and $s_{ll}(g)$ are cut by $\ell^{left}_g$, and thus $\ell^{right}_g$ is behind the right boundary of $s_{hl}(g)$ and $s_{ll}(g)$. Similarly, $s_{hr}(g),s_{lr}(g)$ are cut by $\ell^{right}_g$, and thus $x(\ell^{right}_g)<\min\{x(s_{hr}(g)),x(s_{lr}(g))\}$. Because the left boundary of $s_{next}(g)$ is not reached by $\ell^{left}_g$, line $\ell^{right}_g$ must cut through $s_{next}(g)$, and thus $x(\ell^{right}_g)<x(s_{next}(g))$. The reason for $y(s_{lr}(g))\leq y(s_{next}(g))\leq y(s_{hr}(g))$ is because: $s_{hr}(g)$ and $s_{lr}(g)$ are the highest and the lowest squares cut by $\ell^{right}_g$, and $s_{next}(g)$ is also cut by $\ell^{right}_g$, so $s_{next}(g)$ cannot be higher than $s_{hr}(g)$ or lower than $s_{lr}(g)$.}
\end{remark}

\begin{remark}
{\rm Note that although $x$ in the definition of configuration is a real number, the last requirement on the position of $\ell_g^{right}$ shows that configurations can be discretized by only considering right boundaries of those unit squares. The reason why we use a small constant $\varepsilon$ is to let $s_{next}(g)$ to be a ``next'' square to be met. The role of integer $\tilde k$ is to guess the number of covered points, which will be clear after the definition of an auxiliary directed acyclic graph (DAG) in the following.}
\end{remark}

The DAG with parameter $k_b$ restricted to a block $b$ is constructed as follows and some explanations are given after the construction. For a configuration $u$, we use $\ell^{left}_{u,g}$ and $\ell^{right}_{u,g}$ to denote the sweep lines associated with grid point $g$ in configuration $u$.

\begin{definition}[DAG]\label{def0131-1}
{\rm The vertex set of the auxiliary digraph $G_{b,k_b}$ in block $b$ with parameter $k_b$ consists of all configurations, a source vertex $u_{src}$, and a sink vertex $u_{sink}$. The arcs in $G_{b,k_b}$ are as follows.

({\em Arc between two configurations}) For two configurations $u=(S^{u},\tilde k^{u},x_u)$ and $v=(S^{v},\tilde k^{v},x_v)$, there is an arc $(u,v)$ in $G_{b,k_b}$ if only if all the following conditions hold:

($\romannumeral1$) Let $g'=\arg\min_g\{x(s_{next}^{u}(g)) -x(g)\}$ and $g''=\arg\min_{g:g\neq g'}\{x(s_{next}^{u}(g)) -x(g)\}$. It is required that $x(s_{next}^{u}(g')) -x(g')-\varepsilon<x_v\leq x(s_{next}^{u}(g'')) -x(g'')-\varepsilon$. This condition reflects the requirement that sweep lines must move {\em step by step}, an arc $(u,v)$ is possible only when a sweep line strides over the left boundary of the first $s_{next}^u$, and the movement cannot be too large to stride over the left boundary of the second $s_{next}^u$.

($\romannumeral2$) The token sets in $S^u$ and $S^v$ satisfy the following conditions:

($\romannumeral2_1$) $S^u$ and $S^v$ differ in exactly one grid point, namely $g'$ defined in $(\romannumeral1)$. So, in the following conditions, if a 5-tuple is changed, it always refers to squares associated with $g'$.

($\romannumeral2_2$) ({\em monotonicity of $x$-coordinate}) For any grid point $g$ and any subscript $c\in\{hl,ll,hr,lr,next\}$, $x(s^{v}_{c}(g))\geq x(s^{u}_{c}(g))$. For any subscript $c\in\{lr,hr\}$, if $x(\ell^{right}_{v,g})<x(s_c^u(g))$, then $s_c^u(g)=s_c^v(g)$. For the grid point $g'$ in ($\romannumeral1$), $x(s^{v}_{next}(g'))> x(s^{u}_{next}(g'))$.

($\romannumeral2_3$) ({\em monotonicity of $y$-coordinate}) For any grid point $g$, $y(s_{hl}^{u}(g))\leq y(s_{hl}^{v}(g))$, $y(s_{ll}^{u}(g))\geq y(s_{ll}^{v}(g))$, $y(s_{hr}^{u}(g))\geq y(s_{hr}^{v}(g))$ and $y(s_{lr}^{u}(g))\leq y(s_{lr}^{v}(g))$. For any subscript $c\in\{hr,lr\}$, if $x(\ell^{right}_{v,g})>x(s_c^u(g))$, then $y(s_{hl}^v(g))\geq y(s_c^u(g))\geq y(s_{ll}^v(g))$. For grid point $g'$ in $(\romannumeral1)$, $y(s_{hl}^v(g'))\geq y(s_{next}^u(g'))\geq y(s_{ll}^v(g'))$.

($\romannumeral3$) $\tilde k^{v}$ is the sum of $\tilde k^{u}$ and the number of points which are covered by $U(S^{v})\setminus U(S^{u})$.

The weight on arc $(u,v)$ is set to be $w(u,v)=|S^v\setminus S^u|$.

({\em Arcs from source vertex to configurations}) Let $\ell_{0}^{left}(g)$ and $\ell_{0}^{right}(g)$ be the sweep lines with $x(\ell_{0}^{right}(g))=x_g$ and $x(\ell_{0}^{left}(g))=x_g-1$ (they are the beginning positions of sweep lines). The source vertex $s_{src}$ is linked to every vertex $u=(S^{u},\tilde k^{u},0)$ (beginning configurations corresponding to beginning positions), where every $S^{u}$ has the form $(s_{g}^b,s_{g}^b,s_{hr}(g),s_{lr}(g); s_{next}(g))$ (with virtual square $s_g^b$ serving as $s_{hl}(g)$ and $s_{ll}(g)$), and $\tilde k^{u}$ is the number of points covered by $S^{u}$. The weight on such an arc is $w(s_{src},u)=|S^u|$.

({\em Arcs from configurations to sink vertex}) A vertex $v=(S^v,\tilde k^v,1)$ is linked to the sink vertex $s_{sink}$ only when $S^v$ has the form $(s_{hl}(g),s_{ll}(g),s_{g}^e, s_{g}^e;s_{g}^e)$ (with virtual square $s_g^e$ serving as $s_{hr}(g)$, $s_{lr}(g)$ and $s_{next}(g)$), and $\tilde k^v=k_b$. The weight on such an arc is set to be $w(v,s_{sink})=0$.}
\end{definition}

We shall show later a relation between a source-sink path in the DAG and a solution to the D$k$SH-US$_b$ instance. As an illustration, the following is a source-sink path for the DAG of the example in Fig. \ref{fig0229-1} with $k_b=8$: $u_{scr}u_0u_1u_2u_3u_4u_5u_6u_7u_{sink}$, whose tokens are indicated by Table \ref{tab1}. For this path, the arcs have weights $w(u_{src},u_0)={|\{s_1,s_6\}|}=2$, $w(u_0,u_1)={|\{s_7,s_2\}|}=2$, $w(u_1,u_2)={|\{s_3\}|}=1$, $w(u_2,u_3)={|\{s_4\}|}=1$, $w(u_3,u_4)={|\{s_5\}|}=1$, and $w(u_4,u_{5})=w(u_5,u_{6})=w(u_6,u_{7})=w(u_7,u_{sink})=0$.

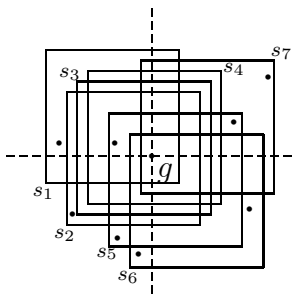
\begin{figure}[h]
\begin{center}
\begin{picture}(110,110)
\put(15,45){\framebox(50,50)}
\put(23,29){\framebox(50,50)}
\put(27,33){\framebox(50,50)}
\put(31,37){\framebox(50,50)}
\put(39,21){\framebox(50,50)}
\put(47,13){\framebox(50,50)}
\put(51,41){\framebox(50,50)}\put(57,47){$g$}\put(55,55){\circle*{2}}
{\scriptsize\put(10,40){$s_1$}\put(18,24){$s_2$}\put(20,85){$s_3$}
\put(82,87){$s_4$}\put(34,17){$s_5$}\put(42,8){$s_6$}\put(100,93){$s_7$} }

\multiput(55,3)(0,5){22}{\line(0,1){3}}
\multiput(0,55)(5,0){22}{\line(1,0){3}}
\put(20,60){\circle*{2}}\put(25,33){\circle*{2}}\put(92,35){\circle*{2}}
\put(42,24){\circle*{2}}\put(86,68){\circle*{2}}\put(99,85){\circle*{2}}
\put(50,18){\circle*{2}}\put(41,60){\circle*{2}}
\end{picture}
\vskip 0.5cm \caption{An illustration for the transition of configurations.  }\label{fig0229-1}
\end{center}
\end{figure}

\begin{table}[h]
\centering
\begin{tabular}{|c|c|c|c|}
\hline
 & {$x^u$} & {$\tilde k^u$} & {$S^u$} \\
\hline
$u_0$ & $0$ & 4 & $(s^b,s^b,s_1,s_6;s_1)$\\
\hline
$u_1$ & $x(s_2)-x(g)-\varepsilon$ & 7 & $(s_1,s_1,s_7,s_6;s_2)$ \\
\hline
$u_2$ & $x(s_3)-x(g)-\varepsilon$ & 7 & $(s_1,s_2,s_7,s_6;s_3)$\\
\hline
$u_3$ & $x(s_4)-x(g)-\varepsilon$ & 7 & $(s_1,s_2,s_7,s_6;s_4)$\\
\hline
$u_4$ & $x(s_5)-x(g)-\varepsilon$ & 8 & $(s_1,s_2,s_7,s_6;s_5)$\\
\hline
$u_5$ & $x(s_6)-x(g)-\varepsilon$ & 8 & $(s_1,s_5,s_7,s_6;s_6)$ \\
\hline
$u_6$ & $x(s_7)-x(g)-\varepsilon$ & 8 & $(s_1,s_6,s_7,s_7;s_7)$ \\
\hline
$u_7$ & 1 & 8 & $(s_1,s_6,s^e,s^e;s^e)$ \\
\hline
\end{tabular}
  \caption{A source-sink path in the DAG for the example in Fig. \ref{fig0229-1}.}\label{tab1}
\end{table}

\begin{remark}\label{rem0719-6}
{\rm It should be noted that all those conditions defining arcs of the DAG are satisfied by the transition of an optimal solution. For example, the first sentence of condition $(\romannumeral2_2)$ reflects the movement of sweep lines from left to right. The second sentence of condition $(\romannumeral2_2)$ holds for an optimal guessing at grid point $g$, because  $x(\ell^{right}_{v,g})<x(s_c^u(g))$ with $c\in\{hr,lr\}$ implies that $\ell^{right}_{v,g}$ {\em cuts through} $s_c^u(g)$ and by the monotonicity of the envelope, only when $\ell^{right}_g$ strides over the right boundary of $s_c^u(g)$, can $s_c(g)$ alter to another square. The third sentence of condition $(\romannumeral2_2)$ reflects the fact that a 5-tuple is altered only when the sweep line strides over the boundary of $s_{next}(g')$. The first sentence of $(\romannumeral2_3)$ reflects the monotonicity of $y$-coordinate of the envelope: in the left side of the grid point, $s_{hl}$ is higher and higher, and $s_{ll}$ is lower and lower; while in the right side of the grid point, $s_{hr}$ is lower and lower, $s_{lr}$ is higher and higher. The second sentence of condition $(\romannumeral2_3)$ holds because: $x(\ell^{right}_{v,g})>x(s_c^u(g))$ implies that $\ell^{left}_{v,g}$ cuts through $s_c^u(g)$, and thus $s_c^u(g)$ is no higher than $s_{hl}^v(g)$ and no lower than $s_{ll}^v(g)$.} The third sentence of $(\romannumeral2_3)$ holds because $g'$ is the grid point for which the 5-tuple is altered, which implies that $\ell^{left}_{g'}$ has stridden over the left boundary of $s_{next}^u(g')$, and thus $s_{next}^u(g')$ is cut by $\ell^{left}_{v,g'}$.
\end{remark}

For any source-sink path $P=u_{src}u_0u_1\ldots u_tu_{sink}$ in $G_{b,k_b}$, denote by $\mathcal S(P)=\bigcup_{i=0}^tS^{u_i}$. The following lemma shows that any source-sink path $P$ in $G_{b,k_b}$ corresponds to a feasible solution to the D$k$SH-US$_b$ instance whose weight equals $|\mathcal S(P)|$.

\begin{lemma}\label{lem0217}
Let $P=u_{src}u_0u_1\ldots u_tu_{sink}$ be a source-sink path in $G_{b,k_b}$. Then the set of unit squares $\mathcal S(P)=\bigcup_{i=0}^tU(S^{u_i})$ covers at most $k_b$ points. Furthermore,
$$w(P)=|\mathcal S(P)|.$$
\end{lemma}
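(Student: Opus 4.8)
\noindent The plan is to prove the lemma by induction along the path $P$, after strengthening it. I will show that for every $i\in\{0,1,\ldots,t\}$, writing $\mathcal S_i=\bigcup_{j=0}^{i}S^{u_j}$,
\begin{equation}\label{eq:pp-inv}
\tilde k^{u_i}=\bigl|U(\mathcal S_i)\bigr|
\qquad\text{and}\qquad
w(u_{src}u_0u_1\cdots u_i)=|\mathcal S_i|,
\end{equation}
where $|U(\mathcal S_i)|$ is the number of covered points and $w(u_{src}u_0\cdots u_i)$ is the total weight of the displayed prefix path. The base case $i=0$ is immediate from the arcs leaving $u_{src}$ in Definition \ref{def0131-1}: there $\tilde k^{u_0}$ is set to the number of points covered by $S^{u_0}$ and $w(u_{src},u_0)=|S^{u_0}|$. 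Granting \eqref{eq:pp-inv} for $i=t$, the lemma follows at once: the arc $(u_t,u_{sink})$ exists only if $\tilde k^{u_t}=k_b$ and it has weight $0$, so $|U(\mathcal S(P))|=|U(\mathcal S_t)|=\tilde k^{u_t}=k_b$ (in particular at most $k_b$) and $w(P)=|\mathcal S_t|+0=|\mathcal S(P)|$.

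For the inductive step, consider the arc $(u_i,u_{i+1})$. By condition $(\romannumeral2_1)$ of Definition \ref{def0131-1} the $5$-tuples of $u_i$ and $u_{i+1}$ agree at every grid point except $g'$, so both the new squares and the newly covered points created in this transition come only from the $5$-tuple at $g'$. I would reduce the preservation of \eqref{eq:pp-inv} to two ``no-reappearance'' statements: \textbf{(A)} every square in $S^{u_{i+1}}$ not in $S^{u_i}$ lies in no $S^{u_j}$ with $j\le i$; and \textbf{(B)} every point covered by $U(S^{u_{i+1}})$ but not by $U(S^{u_i})$ is covered by no $U(S^{u_j})$ with $j\le i$. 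Indeed, (A) gives $|\mathcal S_{i+1}|=|\mathcal S_i\cup S^{u_{i+1}}|=|\mathcal S_i|+|S^{u_{i+1}}\setminus S^{u_i}|=|\mathcal S_i|+w(u_i,u_{i+1})$, which with the inductive hypothesis yields the weight identity; and (B) gives $|U(\mathcal S_{i+1})|=|U(\mathcal S_i)|+|U(S^{u_{i+1}})\setminus U(S^{u_i})|$, where by condition $(\romannumeral3)$ the last term equals $\tilde k^{u_{i+1}}-\tilde k^{u_i}$, so $|U(\mathcal S_{i+1})|=\tilde k^{u_{i+1}}$.

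Statement (A) is the routine one. I would carry along the auxiliary invariant that, for each grid point $g$, every square of $\mathcal S_i$ associated with $g$ other than the three currently recorded right-side squares $s_{hr}^{u_i}(g),s_{lr}^{u_i}(g),s_{next}^{u_i}(g)$ has right boundary strictly to the left of $x(\ell^{right}_{u_i,g})$. By the $x$-monotonicity in condition $(\romannumeral2_2)$ a right-side slot at $g'$ changes only after $\ell^{right}_{g'}$ has passed the right boundary of the departing square, and the old $s_{next}^{u_i}(g')$ is absorbed into $s_{hl}$ or $s_{ll}$ in the same step (with $\ell^{right}_{g'}$ having passed its right boundary); hence the invariant is preserved. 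It also yields (A): any square newly occupying a slot at $g'$ in $u_{i+1}$ -- the new $s_{next}^{u_{i+1}}(g')$, and possibly a new $s_{hr}^{u_{i+1}}(g')$ or $s_{lr}^{u_{i+1}}(g')$ -- is cut by $\ell^{right}_{u_{i+1},g'}$, hence has right boundary beyond $x(\ell^{right}_{u_i,g'})$, and is distinct from the three excepted squares, so by the invariant it was never recorded.

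Statement (B) is the crux, and I expect it to be the main obstacle: it is precisely the double-counting danger illustrated in Figures \ref{fig0421-1}--\ref{fig0423-1}, and the only reason the construction runs two sweep lines kept one unit apart. Here (B) can fail only when a square $s^{*}$ newly recorded at $g'$ covers a point that some earlier, now-dropped square also covers. To rule this out I would strengthen the induction once more to a geometric invariant $U(\mathcal S_i)=R_i$, where $R_i$ is a ``committed region'' read directly off $u_i$: for each grid point $g$, the part of the recorded upper/lower envelope lying to the left of $\ell^{left}_{u_i,g}$, the part lying to the right of $\ell^{right}_{u_i,g}$, and the full current $5$-tuple squares at $g$. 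Two facts are then needed: (i) $R_i\subseteq R_{i+1}$ and $R_{i+1}\setminus R_i$ is exactly the fresh territory added by the newly recorded squares at $g'$, so that counting ``$U(S^{u_{i+1}})\setminus U(S^{u_i})$'' counts it correctly; and (ii) no square recorded after step $i$ can re-enter $R_i\setminus U(S^{u_i})$ -- this is where the unit side length enters, since a later-recorded square at $g$ first appears as $s_{next}$, $s_{hr}$ or $s_{lr}$ and is therefore cut by $\ell^{right}_{u_j,g}$, forcing its left boundary past $x(\ell^{left}_{u_i,g})$, while within the width-one strip between $\ell^{left}_{u_{i+1},g'}$ and $\ell^{right}_{u_{i+1},g'}$ the square $s^{*}$ meets only region already committed by squares still present in the current $5$-tuple at $g'$ (the just-absorbed old $s_{next}(g')$ and the squares still straddling a sweep line of $g'$). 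Verifying that this invariant survives every permitted $5$-tuple change -- replacing $s_{next}(g')$, absorbing the old $s_{next}(g')$ into $s_{hl}(g')$ or $s_{ll}(g')$, and the simultaneous updates of $s_{hr}(g'),s_{lr}(g')$ allowed by conditions $(\romannumeral2_2)$ and $(\romannumeral2_3)$ -- is the step that demands the most careful case analysis, and is the part I would budget the most effort for.
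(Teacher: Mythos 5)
Your statement (A), and the way you use it to obtain $w(P)=|\mathcal S(P)|$, is essentially the paper's argument: the paper proves the contrapositive (once a square $s$ leaves the token sets at some $S^{u_j}$ it can never re-enter any later $S^{u_{j'}}$) by a three-case analysis over the slot in which $s$ appeared, using the $x$- and $y$-monotonicity conditions $(\romannumeral2_2)$ and $(\romannumeral2_3)$ of Definition \ref{def0131-1}. Your right-boundary invariant handles the right-hand slots cleanly, but as stated it does not rule out a previously dropped square re-entering as $s_{hl}$ or $s_{ll}$: nothing in Definition \ref{def0131-1} forces the new $s_{hl}^{u_{i+1}}(g')$ to be either the absorbed $s_{next}^{u_i}(g')$ or the old $s_{hl}^{u_i}(g')$, and a dropped square has right boundary to the left of $x(\ell^{right}_{u_i,g'})$, which is exactly where a legal $s_{hl}^{u_{i+1}}(g')$ may sit. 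To close (A) for the left slots you still need the $y$-monotonicity arguments that the paper carries out in its Cases 1--3.

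The more consequential problem is (B). You have identified it as the crux and left it as a sketch, but it is not needed for this lemma. The first assertion is only an upper bound: since $\tilde k$ accumulates $|U(S^{u_{i}})\setminus U(S^{u_{i-1}})|$ at every step, each point of $U(\mathcal S(P))$ is counted at least once (at the first configuration covering it), so $k_b=\tilde k^{u_t}\geq |U(\mathcal S(P))|$ regardless of any double counting of points; this one-line observation is the paper's entire proof of the first half. The exact-counting property you are pursuing is what the paper establishes in the proof of Lemma \ref{lem0201-1}, and only for the particular path constructed from a feasible solution $\mathcal O^*$, where the genuine envelope geometry is available --- not for an arbitrary source-sink path of the DAG. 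So your plan budgets its main effort for a strengthened invariant ($\tilde k^{u_i}=|U(\mathcal S_i)|$ along arbitrary paths) that is both unnecessary here and not obviously true in that generality. Dropping it to the one-sided inequality makes the first half immediate and leaves only (A), completed along the paper's lines, as the real content of the proof.
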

\begin{proof}
By the definition of DAG in Definition \ref{def0131-1}, especially ($\romannumeral3$) and the existence of arc $(u_t, u_{sink})$, the first half of the lemma follows from the observation that every point covered by $\mathcal S(P)$ is counted into the parameter $k_b$.

By the method of assigning weight, we have $w(u_{i-1},u_{i})=|S^{u_i}\setminus S^{u_{i-1}}|$ for $i=0,\ldots,t$ and $w(u_t,u_{sink})=0$, where $u_{-1}=u_{src}$ and $S^{u_{-1}}=\emptyset$. So $w(P)=\sum_{i=0}^t|S^{u_i}\setminus S^{u_{i-1}}|\geq |\mathcal S(P)|$, the left hand side inequality is proved.

To prove the right hand side inequality, we show that each square $s\in \mathcal S(P)$ is counted exactly once in $w(P)$. For this purpose, assume that the first configuration that contains $s$ is $S^{u_i}$, and $s$ keeps to be in the token sets of a {\em consecutive} segment of the path until it disappears from some configuration, say $S^{u_j}$. Note that $s$ is counted as a new square in $S^{u_i}$, and is not counted anymore between $S^{u_i}$ and $S^{u_j}$. In the following, we prove that
\begin{equation}\label{eq0719-1}
\mbox{for any $j'\geq j$, $s$ will not appear in $S^{u_{j'}}$.}
\end{equation}
Suppose $s$ is associated with grid point $g$. We prove \eqref{eq0719-1} by distinguishing three cases.

{\bf Case 1:} $s$ appears in $S^{u_i}$ as $s_{hl}$ or $s_{ll}$.

Suppose $s=s_{hl}^{u_i}(g)$ (the argument for the case when $s=s_{ll}^{u_i}(g)$ is similar). By the monotonicity of $x$-coordinate (see condition ($\romannumeral2_2$) in the construction of DAG), after $s$ is replaced by another square $s'$ to serve as $s_{hl}$, we have $x(s)<x(s')$. In particular,
\begin{equation}\label{eq0509-1}
x(s)<x(s_{hl}^{u_j}(g))\leq x(s_{hl}^{u_{j'}}(g)),
\end{equation}
and thus $s$ will not serve as $s_{hl}^{u_{j'}}(g)$. By the definition of configuration, we have $x(s_{hl}^{u_{j'}}(g))<$ $\min\{x(s_{next}^{u_{j'}}(g)),x(s_{hr}^{u_{j'}}(g)),x(s_{lr}^{u_{j'}}(g))\}$. Combining this with \eqref{eq0509-1}, $s$ will not serve as $s_{next}^{u_{j'}}(g)$, $s_{hr}^{u_{j'}}(g)$, $s_{lr}^{u_{j'}}(g)$. If $s_{ll}^{u_i}(g)=s$, similar to the above, $s$ will not serve as $s_{ll}$ in any configuration after $u_j$. Next consider the case when $s_{ll}^{u_i}(g)\neq s$. By the monotonicity of $y$-coordinate (see $(\romannumeral2_3)$ in the construction of DAG),
\begin{equation}\label{eq0709-2}
y(s_{ll}^{u_{j'}}(g))\leq y(s_{ll}^{u_i}(g)).
\end{equation}
By the definition of configuration,
\begin{equation}\label{eq0709-1}
y(s)=y(s_{hl}^{u_i}(g))\geq y(s_{ll}^{u_i}(g)).
\end{equation}
In fact, since we have assumed that squares are in generic positions and are now considering the case $s_{ll}^{u_i}(g)\neq s$, inequality \eqref{eq0709-1} must be strict. Combining this with inequality \eqref{eq0709-2}, we have $y(s_{ll}^{u_{j'}}(g))<y(s)$ and thus $s_{ll}^{u_{j'}}(g)\neq s$.

{\bf Case 2:} $s$ appears in $S^{u_i}$ as $s_{next}$.

Note that $s$ disappears from $S^{u_j}$ implies that line $\ell^{left}_g$ has stridden over the left boundary of $s=s^{u_i}_{next}(g)$. Hence $x(s)<x(\ell^{right}_{u_j,g})\leq x(\ell^{right}_{u_{j'},g})$. Then by the definition of configuration, we have
$$
x(s)<x(\ell^{right}_{u_{j'},g})\leq \min\{x(s_{next}^{u_{j'}}),x(s_{hr}^{u_{j'}}(g)),x(s_{lr}^{u_{j'}}(g))\},
$$
and thus $s$ will never serve as $s_{next}$, $s_{hr}$ or $s_{lr}$ after $u_j$.

Next, we show that
\begin{equation}\label{eq0714-1}
\mbox{$s$ will not serve as $s_{hl}$ or $s_{ll}$ after $u_j$, too.}
\end{equation}
Because $s$ disappears from $S^{u_j}$, there must exist an index $z$ with $i\leq z<j$ such that $s_{next}^{u_z}(g)=s$ and $s_{next}^{u_{z+1}}(g)\neq s$. At this time, $g$ serves as $g'$ in $(\romannumeral1)$ of the DAG, and thus the third condition of $(\romannumeral2_3)$ is satisfied taking $u=u_z$ and $v=u_{z+1}$. It follows that
\begin{align*}
& y(s_{ll}^{u_{j'}}(g))\leq y(s_{ll}^{u_j}(g))\leq y(s_{ll}^{u_{z+1}}(g))\leq y(s_{next}^{u_z}(g))=y(s)\ \mbox{and}\\
& y(s)=y(s_{next}^{u_z}(g))\leq y(s_{hl}^{u_{z+1}}(g))\leq y(s_{hl}^{u_j}(g)) \leq y(s_{hl}^{u_{j'}}(g)).
\end{align*}
Note that for each of the above sequences of inequalities, there must be a strict inequality, because $s$ has disappeared from $S^{u_j}$ and we have assume that squares are in generic positions. So, \eqref{eq0714-1} is proved.

{\bf Case 3:} $s$ appears in $S^{u_i}$ as $s_{hr}$ or $s_{lr}$.

Suppose $s=s_{hr}^{u_i}(g)$ (the case when $s=s_{lr}^{u_i}(g)$ is similar). After $u_j$, $s$ will never serve as $s_{lr}$ or $s_{hr}$, because $y(s_{lr}^{u_{j'}}(g))\leq y(s_{hr}^{u_{j'}}(g))<y(s_{hr}^{u_i}(g))$. Recall that $S^{u_{j-1}}$ is the last configuration of the maximal consecutive segment containing $s$. If $s$ appears in $S^{u_{j-1}}$ as $s_{hl}$, $s_{ll}$ or $s_{next}$, then similar to Case 1 and Case 2, $s$ will never reappear after $S^{u_j}$. Next consider the case when $s=s_{hr}^{u_{j-1}}(g)$. In this case, by the second condition of $(\romannumeral2_2)$ in the definition of DAG and because $s$ disappears from $S^{u_j}$, we must have
\begin{equation}\label{eq0716-1}
 x(\ell^{right}_{u_j,g})>x(s_{hr}^{u_{j-1}}(g))=x(s).
\end{equation}
By the definition of configuration and the monotonicity of sweep lines, $x(s_{next}^{u_{j'}})(g)> x(\ell^{right}_{u_{j'},g})\geq x(\ell^{right}_{u_j,g})$. Hence $x(s_{next}^{u_{j'}})(g)>x(s)$, and thus $s$ will not appear as $s_{next}$ after $u_j$. Furthermore, combining \eqref{eq0716-1} with the second condition of $(\romannumeral2_3)$, and because $s$ disappears from $u_j$, we have
\begin{equation}\label{eq0715}
y(s_{hl}^{u_{j'}}(g))\geq y(s_{hl}^{u_j}(g))>y(s_{hr}^{u_{j-1}}(g))=y(s)>y(s_{ll}^{u_j}(g))\geq y(s_{ll}^{u_{j'}}(g)).
\end{equation}
So, $s$ will not serve as $s_{hl}$ or $s_{ll}$ after $u_j$. Similar argument shows the same conclusion if $s=s_{hl}^{u_{j-1}}(g)$. To sum up all the above cases, $s$ is not double-counted in Case 3.

From above three cases, any $s\in \mathcal S(P)$ is counted exactly once in $w(P)$ and the righthand side inequality is proved.
\end{proof}
The next lemma shows that any feasible solution to the D$k$SH-US instance corresponds to a source-sink path in $G$ whose weight equals the cost of the solution.

\begin{lemma}\label{lem0201-1}
Let $\mathcal O^*$ be a feasible solution to the D$k$SH-US$_b$ instance. Then there is a source-sink path $Q$ in $G_{b,k_b}$ with $w(Q)=|\mathcal O^*|$.
\end{lemma}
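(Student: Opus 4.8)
The plan is to reverse the construction in Lemma~\ref{lem0217}: given an optimal solution $\mathcal O^*$, I would explicitly build a source-sink path $Q$ by running the two-sweep-line process described in the informal subsection and reading off a configuration at each ``event'' position. First I would fix, for every grid point $g$, the set $\mathcal O^*_g$ of squares of $\mathcal O^*$ containing $g$, and imagine the synchronized pair of sweep lines $\ell_g^{left}, \ell_g^{right}$ with $x(\ell_g^{right}) = x(g) + x$ sliding $x$ from $0$ to $1$. As $x$ increases, record at each moment the 5-tuple $S(g) = (s_{hl}(g), s_{ll}(g), s_{hr}(g), s_{lr}(g); s_{next}(g))$, where $s_{hl}, s_{ll}$ are the highest/lowest squares of $\mathcal O^*_g$ cut by $\ell_g^{left}$, $s_{hr}, s_{lr}$ the highest/lowest cut by $\ell_g^{right}$, and $s_{next}(g)$ the square of $\mathcal O^*_g$ whose left boundary is next to be crossed by $\ell_g^{left}$. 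The virtual squares $s_g^b$ (resp.\ $s_g^e$) serve as the tuple entries when no real square is cut on the left (resp.\ right). The ``events'' are the finitely many $x$-values at which some tuple changes; by the generic-position assumption these occur one grid point at a time, which is exactly what condition $(\romannumeral2_1)$ demands.

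The key steps, in order, are: (1)~Define the sequence of configurations $u_{0}, u_{1}, \ldots, u_{t}$, where $u_{0}$ has $x = 0$ (all $\ell_g^{left}$ to the left of block $b$, so $s_{hl}(g) = s_{ll}(g) = s_g^b$), $u_{t}$ has $x = 1$ (all $\ell_g^{right}$ past the block, so $s_{hr}(g) = s_{lr}(g) = s_{next}(g) = s_g^e$), and consecutive $u_{i-1}, u_{i}$ differ by the single event at position $x_{u_i}$; set $\tilde k^{u_i}$ to be the number of points covered by $U(S^{u_0}) \cup \cdots \cup U(S^{u_i})$. (2)~Verify each $u_i$ is a legal configuration, i.e.\ check the inequalities of Definition~\ref{def0131} — this is essentially the content of Remark~\ref{rem0719-3}. (3)~Verify each pair $(u_{i-1}, u_i)$ satisfies conditions $(\romannumeral1)$--$(\romannumeral3)$ for a DAG arc: the step-by-step movement $(\romannumeral1)$ follows since $x_{u_i}$ is the \emph{next} event after $x_{u_{i-1}}$; the monotonicity conditions $(\romannumeral2_2), (\romannumeral2_3)$ follow from the monotone behavior of the envelope (the highest square on the left only gets higher, the lowest on the right only gets higher, etc.), which is exactly what Remark~\ref{rem0719-6} spells out; and $(\romannumeral3)$ holds by the definition of $\tilde k^{u_i}$ as a running count. (4)~Check that $u_{src} \to u_0$ and $u_t \to u_{sink}$ are arcs: $u_0$ has the required form with $s_g^b$ fillers and $\tilde k^{u_0} = |U(S^{u_0})|$; $u_t$ has the $s_g^e$ form and, crucially, $\tilde k^{u_t} = k_b$ — here I would use that $\mathcal O^*$ covers at most $k_b$ points, padding with a harmless increase if it covers strictly fewer (or, as is implicit throughout, taking $k_b = |U(\mathcal O^*)|$). (5)~Finally, argue $\mathcal S(Q) = \bigcup_i U(S^{u_i}) = \mathcal O^*$ (every square of $\mathcal O^*$ is cut by some sweep line at some moment, hence appears in some tuple; conversely every tuple entry that is not virtual belongs to $\mathcal O^*$), and then $w(Q) = \sum_i |S^{u_i} \setminus S^{u_{i-1}}| = |\mathcal O^*|$ by the telescoping/no-double-counting argument already established inside the proof of Lemma~\ref{lem0217} (each square is born exactly once along $Q$).

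The main obstacle I expect is step~(3), and within it the interaction conditions that entangle different grid points — specifically the second sentences of $(\romannumeral2_2)$ and $(\romannumeral2_3)$, which say that a square serving as $s_{hr}$ or $s_{lr}$ cannot change until the relevant sweep line strides past its right boundary, and that once a sweep line cuts through such a square it is sandwiched between $s_{hl}^v$ and $s_{ll}^v$. Proving these requires carefully tracking which squares are cut by which of the two lines as $x$ advances and invoking the monotonicity of the upper and lower envelopes of $U(\mathcal O^*_g)$; the generic-position assumption is what guarantees the relevant inequalities are strict and that exactly one tuple changes per event. A secondary nuisance is making sure the definition of $\tilde k^{u_i}$ is consistent with condition $(\romannumeral3)$'s phrasing ``$\tilde k^v$ is the sum of $\tilde k^u$ and the number of points covered by $U(S^v) \setminus U(S^u)$'' — this needs the observation that a point covered by $U(S^{u_0}) \cup \cdots \cup U(S^{u_{i-1}})$ but lying in $U(S^{u_i}) \setminus U(S^{u_{i-1}})$ does not exist, which is precisely the two-sweep-line no-repetition phenomenon illustrated in Figs.~\ref{fig0422-1} and~\ref{fig0423-1} and should be isolated as a short claim before assembling the path.
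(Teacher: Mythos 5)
Your plan follows the same route as the paper: sweep synchronized line pairs over each grid point of $\mathcal O^*_g$, read off one configuration per event, verify the configuration and arc conditions via Remarks~\ref{rem0719-3} and~\ref{rem0719-6}, and identify $\mathcal S(Q)$ with $\mathcal O^*$ so that Lemma~\ref{lem0217} gives $w(Q)=|\mathcal S(Q)|=|\mathcal O^*|$. The paper discretizes the events as $x_{u_j}=x(s_j)-x(g_{s_j})-\varepsilon$, one per square of $\mathcal O^*$, which also guarantees (via the $s_{next}$ slot) that every square of $\mathcal O^*$ enters some tuple; your ``every square is cut by some sweep line'' justification for $\mathcal S(Q)=\mathcal O^*$ is looser than this, since an interior (non-envelope) square never appears as $s_{hl},s_{ll},s_{hr}$ or $s_{lr}$ and is captured only through $s_{next}$, but this is easily repaired.

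The genuine gap is the statement you defer at the very end: that no point is counted twice in the accumulation of $\tilde k$, equivalently that each point is covered by exactly one maximal consecutive segment of configurations along $Q$. You call this a ``short claim'' supported by Figures~\ref{fig0422-1} and~\ref{fig0423-1}, but the figures only illustrate it, and it is in fact the bulk of the paper's proof. The paper establishes it as follows: for a point $p$, let $s_t$ be the square of $\mathcal O^*$ containing $p$ with the largest value of $x(s)-x(g_s)$; one first proves (the ``Claim'') that by the time the first maximal segment covering $p$ ends, say at $u_{j'}$, the right sweep line at $g_{s_t}$ has already passed $x(s_t)$ --- this requires a case analysis comparing $g_{s_t}$ with the grid point of the square $s_a\in S^{u_j}$ that first covers $p$ (the cases $x(g_{s_a})=x(g_{s_t})$ and $x(g_{s_a})=x(g_{s_t})+1$, the case $x(g_{s_t})=x(g_{s_a})+1$ being excluded by maximality of $s_t$), together with the observation that the tuple entries at $g_{s_a}$ sandwich the portion of $s_a$ between the two sweep lines. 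One then concludes from the $x$- and $y$-monotonicity conditions of the DAG that no square of $\mathcal O^*$ containing $p$ can occupy any of the five slots in any later configuration. Without this argument, the equality $\tilde k^{u_T}=k_b$ (and the consistency of your cumulative-union definition of $\tilde k^{u_i}$ with condition $(\romannumeral3)$) is unproved, and the arc into $u_{sink}$ is not justified.
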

\begin{proof}
The desired path $Q$ can be constructed by tracing $\{\mathcal O^*_g\}$ simultaneously. For each square $s\in \mathcal O^*$, let $x'(s)=x(s)-x(g_s)-\varepsilon$, where $g_s$ is the grid point contained in $s$. Order the unit squares in $\mathcal O^*$ as $s_1,s_2,\ldots,s_T$ such that $x'(s_1)<x'(s_2)<\cdots <x'(s_T)$. For each grid point $g$, order unit squares in $\mathcal O^*_g$ as $\{s_{g,1},\ldots,s_{g,t_{g}}\}$, where $x(s_{g,1})<\cdots x(s_{g,t_{g}})$.

We construct a source-sink path $Q=u_{src}u_0u_1\ldots u_Tu_{sink}$ as follows, where every $u_j$ has the form $(S^{u_j},\tilde k^{u_j},x_{u_j})$ and $x_{u_j}=x'(s_j)$ for $j=0,1,\ldots,T$ (the idea is to move the sweep line step by step along the positions of $s_1,\ldots,s_T$), where $x'(s_0)=0$. Referring to Table \ref{tab1} for the example in Fig. \ref{fig0229-1} might be helpful in understanding the construction.

Let $u_0=(S^{u_0},\tilde k^{u_0},0)$ be the configuration with the following structure: every 5-tuple in $S^{u_0}$ has the form $(s^{b}_g, s^{b}_g, s_{hr}(g),s_{lr}(g);s_{next}(g))$, where $s_{next}(g)=s_{g,1}$, $s_{hr}(g)$ and $s_{lr}(g)$ are the highest and the lowest squares of $\mathcal O^*_g$ cut by line $\ell(x(g)-\varepsilon)$; $\tilde k^{u_0}$ is the number of points covered by $U(S^{u_0})$. By the construction of DAG, $(u_{src},u_0)$ is an arc in $G$. Suppose by induction that we have found a path $u_{src}u_0\ldots u_{j-1}$ in $G_{b,k_b}$. Construct a configuration $u_j=(S^{u_j},\tilde k^{u_j},x_{u_j})$ in the following way: let $g'$ be the grid point contained in $s_{j+1}$; all 5-tuples in $S^{u_j}$ remain the same except for $S^{u_j}(g')$, in which $s_{next}(g')=s_{j+1}$, $s_{hl}(g')$ and $s_{ll}(g')$ are the highest and the lowest squares of $\mathcal O^*_g$ cut by line $\ell(x(s_{j+1})-1-\varepsilon)$, $s_{hr}(g')$ and $s_{lr}(g')$ are the highest and the lowest squares of $\mathcal O^*_g$ cut by line $\ell(x(s_{j+1})-\varepsilon)$ (using $s_g^e$ if the above $s_{hr}(g')$ and $s_{lr}(g')$ do not exist); $\tilde k^{u_j}$ equals the sum of $\tilde k^{u_{j-1}}$ and the number of points covered by $U(S^{u_{j}})\setminus U(S^{u_{j-1}})$. By Definitions \ref{def0131}, Definition \ref{def0131-1}, Remark \ref{rem0719-3} and Remark \ref{rem0719-6}, $u_j$ is a valid configuration and $(u_{j-1},u_j)$ is an arc in $G_{b,k_b}$. Continuing in this way, we could find a path $u_{src}u_0u_1\ldots u_T$, where vertex $u_T$ corresponds to a configuration $(S^{u_T},\tilde k^{u_T},1)$ with the following form: every 5-tuple in $S^{u_T}$ has the form $(s_{hl}(g),s_{ll}(g),s_{g}^e,s_{g}^e;s_{g}^e)$, where $s_{hl}(g)$ and $s_{ll}(g)$ are the highest and the lowest squares of $\mathcal O^*_g$ cut by line $\ell(x(g)-\varepsilon)$. Denote $\tilde Q=u_{src}u_0u_1\ldots u_T$.

To finish the construction of $Q$, what remains to show is that
\begin{equation}\label{eq0113}
\tilde{k}^{u_T}=k_b\ \mbox{and thus $u_T$ is linked to $u_{sink}$.}
\end{equation}
For this purpose, it suffices to prove that
\begin{equation}\label{eq0618-1}
\mbox{every point covered by $\mathcal O^*$ is counted exactly once in the accumulation of $\tilde k^{u_T}$.}
\end{equation}

For a point $p$ covered by $\mathcal O^*$, among the squares in $\mathcal O^*$ containing $p$,
\begin{equation}\label{eq0720-3}
\mbox{denote by $s_{t}$ the square with the largest $(x(s)-x(g_s))$-value,}
\end{equation}
where $g_s$ is the grid point contained in $s$. We say that vertex $u$ covers point $p$ if $p$ is contained in some square of $S^u$. Note that when going along $\tilde Q$, every time we meet a {\em maximal consecutive} segment covering $p$, point $p$ is counted once. So, to prove \eqref{eq0618-1}, what we need to show is that there is only one maximal consecutive segment of $\tilde Q$ covering $p$. For this purpose, we first prove the following claim.

\vskip 0.2cm {\bf Claim.} Let $u_j$ be the first vertex of $\tilde Q$ covering $p$, and let $u_{j'-1}$ be the end of this maximal consecutive segment covering $p$. Then $x(\ell_{u_{j'},g_{s_t}}^{right})>x(s_{t})$.

We only consider the case when $p$ is on the right of $g_{s_t}$ (the argument for the case when $p$ is on the left of $g_{s_t}$ is similar). We prove the claim by contradiction. Note that $x(\ell_{u_{j'},g_{s_t}}^{right})\leq x(s_{t})$ implies that the union $s_{hr}^{u_{j'}}(g_{s_t})\cup s_{lr}^{u_{j'}}(g_{s_t})$ contains the area of $s_t$ between line $\ell(x(g_{s_t}))$ and line $\ell_{u_{j'},g_{s_t}}^{right}$. So, if $p$ is on the left-side of $\ell_{u_{j'},g_{s_t}}^{right}$, then either $s_{hr}^{u_{j'}}(g_{s_t})$ or $s_{lr}^{u_{j'}}(g_{s_t})$ contains $p$, contradicting that $u_{j'}$ does not cover $p$. Hence
\begin{equation}\label{eq0605-1}
\mbox{if the claim is not true, then}\ x(p)>x(\ell_{u_{j'},g_{s_t}}^{right}).
\end{equation}

Let $s_a$ be a square of $S^{u_j}$ containing $p$ and assume that $s_a$ is associated with grid point $g_{s_a}$. Since $p$ belongs to both $s_a$ and $s_t$, squares $s_a$ and $s_t$ intersect. Hence $x(g_{s_t})$ can only be $x(g_{s_a})-1$ or $x(g_{s_a})$ or $x(g_{s_a})+1$.  If $x(g_{s_t})=x(g_{s_a})+1$, then $x(s_a)+1>x(s_t)$ (because $s_a$ and $s_t$ intersect), which contradicts the assumption that $s_t$ has the largest $(x(s)-x(g_s))$-value among all squares of $\mathcal O^*$ containing $p$. So there are two cases left.

{\bf Case 1.} $x(g_{s_a})=x(g_{s_t})$.

To obtain a contradiction in this case, we first prove
\begin{equation}\label{eq0531-1}
x(\ell_{u_{j'},g_{s_a}}^{right})>x(s_a).
\end{equation}
In fact, if \eqref{eq0531-1} is not true, then $\ell_{u_{j'},g_{s_a}}^{left}$ is on the left of $s_a$, and thus $s_a$ cannot be $s^{u_j}_{ll}(g_{s_a})$ or $s^{u_j}_{hl}(g_{s_a})$.
Suppose $s_a$ appears as $s^{u_j}_{lr}(g_{s_a})$. By the monotonicity of envelope, as long as $\ell^{right}_{g_{s_a}}$ has not stridden over the right boundary of $s_a$, then $s_a$ remains to be $s_{lr}(g_{s_a})$ in the following configurations of $Q$. Combining this observation with the assumption that $u_{j'}$ does not cover $p$, and thus $S^{u_{j'}}$ does not contain $s_a$, we have $x(\ell_{u_{j'},g_{s_a}}^{right})>x(s_a)$. A similar argument shows the validity of \eqref{eq0531-1} in the case when $s_a$ appears as $s^{u_j}_{hr}(g_{s_a})$ or $s^{u_j}_{next}(g_{s_a})$.

Combining \eqref{eq0605-1} and \eqref{eq0531-1}, making use of $g_{s_t}=g_{s_a}$, we have $x(p)>x(s_a)$, contradicting the assumption that square $s_a$ contains point $p$. So, the claim holds in this case.


%

{\bf Case 2:} $x(g_{s_a})=x(g_{s_t})+1$.

In this case,
\begin{equation}\label{eq0720-1}
\mbox{point $p$ is on the left side of grid point $g_{s_a}$,}
\end{equation}
and $x(\ell^{right}_{u_{j'},g_{s_t}})=x(\ell^{left}_{u_{j'},g_{s_a}})$. Then by \eqref{eq0605-1},
\begin{equation}\label{eq0617-1}
\mbox{if the claim is not ture, then $p$ is on the right side of $\ell^{left}_{u_{j'},g_{s_a}}$.}
\end{equation}
Note that the union $s_{hl}^{u_{j'}}(g_{s_a})\cup s_{ll}^{u_{j'}}(g_{s_a})\cup s_{next}^{u_{j'}}(g_{s_a})$ contains the area of $s_a$ between the line $\ell_{u_{j'},g_{s_a}}^{left}$ and the line $\ell(x(g_{s_a}))$. Combining this observation  with \eqref{eq0720-1} and \eqref{eq0617-1}, if the claim is not true, then $s_{hl}^{u_{j'}}(g_{s_a})\cup s_{ll}^{u_{j'}}(g_{s_a}) \cup s_{next}^{u_{j'}}(g_{s_a})$ contains $p$, contradicting the assumption that $u_{j'}$ does not cover $p$. In any case, we have proved the claim.

Combining the claim with the construction of the DAG and the definition of $s_t$ in \eqref{eq0720-3}, for any configuration $u_i$ with $i\geq j'$ and any square $s\in \mathcal O^*$ containing $p$, we have
$$
x(\ell_{u_{j'},g_s}^{right})>x(s),
$$
and the following two observations hold.

$(a)$ $x(\ell_{u_i,g_s}^{right})\geq x(\ell_{u_{j'},g_s}^{right})>x(s)$. By the definition of configuration, $x(\ell_{u_i,g_s}^{right}) < \min\{x(s^{u_i}_{hr}(g_s)),$ $x(s^{u_i}_{lr}(g_s)),x(s^{u_i}_{next}(g_s))\}$, and thus $s$ cannot appear as $s_{hr},s_{lr},s_{next}$ in $S^{u_i}$.

$(b)$ $x(\ell_{u_i,g_s}^{left})\geq x(\ell_{u_{j'},g_s}^{left})>x(s)-1$. Since $s\not\in S^{u_{j'}}(g_s)$, it is neither the highest square nor the lowest square at $\ell_{u_{j'},g_s}^{left}$. By the monotonicity of $y$-coordinate for the DAG, $s$ cannot appear as $s_{hl}$ or $s_{ll}$ in $S^{u_i}$.

By these observations, no $u_i$ with $i\geq j'$ can cover $p$. Then property \eqref{eq0618-1} is proved, and \eqref{eq0113} follows. This finishes the construction of a source-sink path $Q$ in $G_{b,k_b}$.

Note that $w(Q)=|\mathcal S(Q)|$ because of Lemma \ref{lem0217}. By the construction of $Q$, we have $\mathcal S(Q)= \mathcal O^*$ (note that every $s\in\mathcal O^*$ has served as a $s_{next}$ in the construction, and the construction only uses squares in $\mathcal O^*$). So, $w(Q)=\mathcal |O^*|$.
\end{proof}

Combining Lemma \ref{lem0217} and Lemma \ref{lem0201-1}, we have the following result

\begin{theorem}\label{thm0115}
A maximum weight source-sink path $P^*$ in the auxiliary digraph $G_{b,k_b}$ satisfies $|\mathcal S(P^*)|\geq |\mathcal O^*|$, where $\mathcal O^*$ is any feasible solution to the D$k$SH-US$_b$ instance. Furthermore, $\mathcal S(P^*)$ covers at most $k_b$ points. As a consequence, an optimal solution to a D$k$SH-US$_b$ instance can be found by finding a maximum-weight source-sink path in $G_{b,k_b}$, which can be done in time $O(k_b^2m_b^{O(a^2)})$, where $m_b=|\mathcal S_b|$.
\end{theorem}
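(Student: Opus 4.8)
The plan is to combine the two preceding lemmas and then bound the size of the auxiliary digraph.

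First I would dispatch the combinatorial content. Let $\mathcal O^*$ be an arbitrary feasible solution to the D$k$SH-US$_b$ instance. By Lemma \ref{lem0201-1} there is a source-sink path $Q$ in $G_{b,k_b}$ with $w(Q)=|\mathcal O^*|$, and since $P^*$ has maximum weight among all source-sink paths, $w(P^*)\ge w(Q)=|\mathcal O^*|$. Lemma \ref{lem0217} then gives $|\mathcal S(P^*)|=w(P^*)\ge |\mathcal O^*|$ and, simultaneously, that $\mathcal S(P^*)$ covers at most $k_b$ points, so $\mathcal S(P^*)$ is itself a feasible solution to the instance. Specializing $\mathcal O^*$ to an optimal solution shows $\mathcal S(P^*)$ is optimal, which is exactly the first two assertions of the theorem.

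Next I would handle the running time, and here the first thing to check is that $G_{b,k_b}$ is acyclic, since otherwise a ``maximum-weight path'' would not be computable in polynomial time. Along any arc between two configurations $u$ and $v$, condition $(\romannumeral1)$ forces $x_v>x(s_{next}^{u}(g'))-x(g')-\varepsilon$; on the other hand, the last clause of Definition \ref{def0131}, together with the genericity of the squares and the smallness of $\varepsilon$, pins down $x_u=x(s_{next}^{u}(g'))-x(g')-\varepsilon$ (the grid point realizing that clause must coincide with the minimizer $g'$). Hence $x_u<x_v$ strictly; and since arcs out of $u_{src}$ land in configurations with $x=0$ and arcs into $u_{sink}$ leave configurations with $x=1$, the digraph has no directed cycle. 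Then I would count: a configuration is a triple $(S,\tilde k,x)$ with $S=\bigcup_g S(g)$, where each of the $a^2$ grid points carries a $5$-tuple whose entries are squares of $\mathcal S_b$ or one of the two virtual squares, giving $O(m_b^5)$ choices per grid point and hence $O(m_b^{5a^2})$ choices of $S$; there are $k_b+1$ choices of $\tilde k$; and once $S$ is fixed the last clause of Definition \ref{def0131} leaves only $O(m_b)$ relevant values of $x$. So $G_{b,k_b}$ has $O(k_b m_b^{O(a^2)})$ vertices, hence $O(k_b^2 m_b^{O(a^2)})$ arcs, and each arc-defining condition can be checked in time polynomial in $m_b$ and $a$. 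Since the digraph is a DAG, a maximum-weight source-sink path is found by a dynamic program over a topological order in time $O(|V|+|E|)=O(k_b^2 m_b^{O(a^2)})$.

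The lemma-combination and the counting are routine; the step I expect to need the most care is the acyclicity argument, i.e.\ verifying that the nominally continuous coordinate $x$ strictly increases along every arc (so that the longest-path computation is legitimate rather than \textsf{NP}-hard) and, relatedly, that once $S$ is fixed the value $x$ ranges over only polynomially many candidates, so that the vertex count is indeed $O(k_b m_b^{O(a^2)})$.
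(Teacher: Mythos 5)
Your proof is correct and follows essentially the same route as the paper's: combine Lemma \ref{lem0201-1} (giving a path $Q$ with $w(Q)=|\mathcal O^*|$) and Lemma \ref{lem0217} (giving $|\mathcal S(P^*)|=w(P^*)\geq w(Q)$ and the $k_b$-point feasibility), then bound the number of configurations by $k_b+1$ choices of $\tilde k$, $m_b^{O(a^2)}$ choices of the token sets over the $O(a^2)$ grid points, and $O(m_b)$ discretized positions of $x$, and finish with a linear-time longest-path computation on the DAG. Your explicit check that $x$ strictly increases along every arc (so the digraph really is acyclic and the longest-path computation is legitimate) is a detail the paper leaves implicit, and is a worthwhile addition rather than a deviation.
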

\begin{proof}
By Lemma \ref{lem0217}, $|\mathcal S(P^*)| =w(P^*)$. By Lemma \ref{lem0201-1}, the constructed source-sink path $Q$ has $w(Q)=|\mathcal O^*|$. Since $P^*$ is a maximum weight source-sink path, we have $w(P^*)\geq w(Q)$. Hence $|\mathcal S(P^*)|\geq |\mathcal O^*|$. By Lemma \ref{lem0217}, $\mathcal S(P^*)$ covers at most $k_b$ points.

Next, we consider the time complexity. Note that there are $O(k_bm_b^{O(a^2)})$ configurations. This is because for a configuration $(S,\tilde k,x)$, there are $k_b+1$ choices for $\tilde k$; for each grid point $g$, $S(g)$ has 5 token squares, and there are $(a+1)^2$ grid points, so there are $m_b^{5(a+1)^2}$ choices for $S$; since the positions can be discretized by the right boundaries of the unit squares, $x$ has $m_b+2$ choices (including the beginning and the ending positions). So, the graph $G_{b,k_b}$ has $O(k_bm_b^{O(a^2)})$ vertices and $O(k_b^2m_b^{O(a^2)})$ edges. Then, the time complexity follows from the observation that a maximum-weight source-sink path in a DAG $G=(V,E)$ can be computed in time $O(|V|+|E|)$ (see for example \cite{Cormen}).
\end{proof}

\subsection{Assembling Local Solutions}\label{sec2.3}

The algorithm for the original region implements the shifting and partition technique which was first proposed by Hochbaum and Maass in \cite{Hochbaum85}. Its pseudocode is given in Algorithm \ref{algo1}, and the notations used in the algorithm are described below.

\begin{algorithm}[H]
\caption{Algorithm for D$k$SH-US}
{\bf Input:} An instance of D$k$SH-US $(V,E)$.
	
{\bf Output:} A set of points $V'\subseteq V$.
\begin{algorithmic}[1]
\State Form partitions $P_0,\ldots,P_{a-1}$.
\For {each integer $r\in \{0,\ldots,a-1\}$}
\State Compute $\{B_i^{K_i}\}_{i=0,1,\ldots,t}$ using transition formula \eqref{eq0630-1} and \eqref{eq0625-1}.
\State $\mathcal S_r\leftarrow \arg\max\{c(B^{K_t}_t)\colon K_t\leq 4k\}$ \label{line06628-1}
\EndFor
\State $\mathcal S'\leftarrow \arg\max_{r\in\{0,1,\ldots,a-1\}}\{c(\mathcal S_r)\}$
\State Output $V'\leftarrow$ the set of points covered by $\mathcal S'$.
\end{algorithmic}\label{algo1}
\end{algorithm}

Assume that $A$ is a square containing all the $n$ points of $V$, which has size $a_0\times a_0$. For an integer $a$ which will be determined later, let $N=\lceil a_0/a\rceil$. Extend $A$ into square
$$A_0=\{(x,y)\colon -a\leq x\leq Na, -a\leq y\leq Na\}.$$
Partition $A_0$ into $(N+1)^2$ {\em blocks} of size $a\times a$. Denote this partition of $A_0$ as $P_0$. Note that the lower-left corner of $P_0$ is $(-a,-a)$. For each integer $r \in \{0,\ldots,a-1\}$, construct a partition $P_r$ for square $A_r=\{(x,y)\colon -a+r\leq x\leq Na+r, -a+r\leq y\leq Na+r\}$ by shifting $P_0$ to the upper-right direction by a vector $(r,r)$. Note that every $A_r$ contains all the points of $V$.
The algorithm will solve the problem for each partition, and then picks the best one, where the meaning of ``best'' is in line \ref{line06628-1} of the algorithm.

Note that a solution to the D$k$SH-US instance is point set $V'$. For simplicity of statement, we say that a unit square $s$ is contained in $V'$ if the hyperedge corresponding to $s$ is contained in $V'$, and abuse terminology a little by calling the set of unit squares contained in $V'$ also as a solution.

To compute the problem for a partition $P_r$, we have to {\em guess} the number of points used in each block. This can be done by a dynamic programming method. Order the nontrivial blocks as $b_1,b_2,\ldots,b_t$ (a block is nontrivial if it contains some point) and then process them sequentially. The idea is to iteratively compute a solution to the D$k$SH-US instance confined to the first $i$ blocks, that is, a set $B_i^{K_i}$ of unit squares intersecting the first $i$ blocks covering at most $K_i$ points. However, we could not do it. Instead, making use of the method in the above subsection, we calculate an approximate $B_i^{K_i}$ in the following way.
For the $j$th block and a nonnegative integer $k_j$, let $P_{j,k_j}$ be a maximum-weight source sink path in $G_{j,k_j}$, where $G_{j,k_j}$ is the graph constructed in Definition \ref{def0131-1}. Recall that if $P_{j,k_j}=u_{src}u_0u_1\ldots u_Tu_{sink}$, then $\mathcal S(P_{j,k_j})=\bigcup_{i=0}^TU(S^{u_i})$. Note that if such a path does not exist, then $P_{j,k_j}=null$ and the cost of $null$ is $-\infty$. Let
\begin{equation}\label{eq0630-2}
B_{i}^{K_i}=\arg\max\{\sum_{j=1}^i|\mathcal S(P_{j,k_j})|\colon \sum_{j=1}^{i}k_j=K_i\}.
\end{equation}
It should be emphasized that $B_i^{K_i}$ is a {\em multi-set}, some unit squares striding over more than one blocks might be counted more than once. Define the cost of $B_{i}^{K_i}$ to be $c(B_{i}^{K_i})=\sum_{j=1}^i|\mathcal S(P_{j,k_j})|$. The transition formula for $c(B_i^{K_i})$ is as follows.
\begin{equation}\label{eq0630-1}
c(B_0^{K_0})=\left\{\begin{array}{ll}0, & K_0=0. \\
-\infty, & \mbox{otherwise},\end{array}\right.
\end{equation}
and for $i=1,2,\ldots,t$,

\begin{equation}\label{eq0625-1}
c(B^{K_i}_{i})=\max_{K_{i-1}\colon 0\leq K_{i-1}\leq K_i}\{|\mathcal S(P_{i,K_i-K_{i-1}})|+ c(B^{K_{i-1}}_{i-1})\}.
\end{equation}

The output of the algorithm is the best $B_t^{K_t}$ covering at most $4k$ points, over all partitions.
The following theorem shows the approximation ratio of the algorithm.

\begin{theorem}\label{thm0117}
Setting $a=\lceil 3/\varepsilon\rceil$, Algorithm \ref{algo1} computes a $(4,\frac{1}{1+\varepsilon})$-bicriteria approximate solution to the D$k$SH-US instance in time $O(\frac{1}{\varepsilon} n^3m^{O(1/\varepsilon^2)})$, where $n$ and $m$ are the number of points and the number of unit squares, respectively.
\end{theorem}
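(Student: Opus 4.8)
The plan is to combine the exact block algorithm of Theorem~\ref{thm0115} with the shifting argument of Hochbaum--Maass, and then convert the resulting bicriteria guarantee for D$k$SH-US into one for Min$p$U-US via Theorem~\ref{thm1226}. First I would fix an optimal solution $\mathcal O$ to the D$k$SH-US instance: a set of unit squares covering at most $k$ points and containing the maximum number $opt_{DkSH}$ of squares. For each shift $r\in\{0,\dots,a-1\}$, removing from $\mathcal O$ all squares that are cut by some partition line of $P_r$ yields a sub-solution $\mathcal O_r$ whose squares each lie in a single block; since every unit square is cut by a vertical (resp.\ horizontal) line of at most one shift and $a=\lceil 3/\varepsilon\rceil$, an averaging argument over the $a$ shifts gives some $r$ with $|\mathcal O_r|\ge (1-\tfrac{2}{a})|\mathcal O|\ge \tfrac{1}{1+\varepsilon}opt_{DkSH}$ (using $a\ge 3/\varepsilon$ so that $1-2/a\ge 1/(1+\varepsilon)$; this is where the choice of $a$ is pinned down). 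The squares of $\mathcal O_r$ falling in block $b_j$ still cover at most $k$ points in total, say $k_j^{\ast}$ points in block $b_j$ with $\sum_j k_j^{\ast}\le k$.

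Next I would invoke Theorem~\ref{thm0115}: for each block $b_j$ and each budget $k_j$, the maximum-weight source--sink path $P_{j,k_j}$ in $G_{j,k_j}$ yields a set $\mathcal S(P_{j,k_j})$ covering at most $k_j$ points with $|\mathcal S(P_{j,k_j})|$ at least the optimum of D$k$SH-US$_b$ with parameter $k_j$; in particular $|\mathcal S(P_{j,k_j^{\ast}})|\ge |\mathcal O_r\cap \mathcal S_{b_j}|$. Taking $K_i=\sum_{j\le i}k_j^{\ast}$ in the dynamic program \eqref{eq0630-2}--\eqref{eq0625-1}, the optimality of the transition formula gives $c(B_t^{K_t})\ge \sum_j |\mathcal S(P_{j,k_j^{\ast}})|\ge |\mathcal O_r|\ge \tfrac{1}{1+\varepsilon}opt_{DkSH}$ with $K_t=\sum_j k_j^{\ast}\le k$. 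Here I must be careful that $B_t^{K_t}$ is a multiset and a square straddling two blocks could be double-counted; the point is that this only \emph{helps} the cost $c(\cdot)$, and when we output the actual point set $V'$ covered by the squares in $\mathcal S_r$ we lose nothing on feasibility --- but a straddling square $s$ contributes its points to up to two blocks, so a point of $s$ may have been counted against $k_{j_1}$ and $k_{j_2}$; since we only need the \emph{total} number of distinct covered points to be at most $4k$, and each point lies in at most $4$ blocks of the grid (a unit square meets at most $4$ blocks of side $a\ge 1$, actually at most $4$ since it spans less than $2$ block-widths in each direction), the distinct points covered by $\bigcup_j \mathcal S(P_{j,k_j^{\ast}})$ number at most $\sum_j k_j^{\ast}\le k$ counted with multiplicity, hence at most $k$ distinct — wait, that already gives $k$, not $4k$; the factor $4$ comes from the fact that the selected path in line~\ref{line06628-1} is only required to satisfy $K_t\le 4k$, and a point counted once per block it lies in is counted at most $4$ times, so the $4k$ budget on $K_t$ guarantees at most $k$ \emph{distinct} points — I would phrase this as: choosing the best $B_t^{K_t}$ with $K_t\le 4k$ can only do better than the feasible candidate with $K_t=\sum k_j^{\ast}\le k\le 4k$, and the output distinct-point count is bounded by $K_t\le 4k$ since each covered point is charged in at least one block. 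Taking the best shift $r$ and the best $B_t^{K_t}$ gives $|\mathcal S'|\ge \tfrac{1}{1+\varepsilon}opt_{DkSH}$ while $V'$ has at most $4k$ points, i.e.\ a $(4,\tfrac{1}{1+\varepsilon})$-bicriteria solution.

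For the running time, there are $a=O(1/\varepsilon)$ shifts; within each shift the dynamic program \eqref{eq0625-1} runs over $O(n)$ blocks and $O(n)$ values of each $K_i$ (the number of distinct points, which is at most $n$, and we only care about $K_t\le 4k\le 4n$), and for each block $b_j$ and each of the $O(n)$ budgets $k_j$ we build $G_{j,k_j}$ and find a maximum-weight path in time $O(k_j^2 m_{b_j}^{O(a^2)})=O(n^2 m^{O(1/\varepsilon^2)})$ by Theorem~\ref{thm0115}. Multiplying, the total is $O(\tfrac{1}{\varepsilon}\cdot n\cdot n\cdot n^2 m^{O(1/\varepsilon^2)})$; after absorbing the polynomial-in-$n$ factors into the stated bound this is $O(\tfrac{1}{\varepsilon} n^3 m^{O(1/\varepsilon^2)})$ (the exponent hides the per-block path computation whose vertex count is $m_b^{O(a^2)}=m^{O(1/\varepsilon^2)}$). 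Finally, applying Theorem~\ref{thm1226} with $(\beta,\alpha)=(\tfrac{1}{1+\varepsilon},4)$ — which requires only running the D$k$SH-US algorithm $|V|$ times, contributing another factor $n$ already subsumed — converts this into a $(\tfrac{1}{1+\varepsilon},4)$-bicriteria algorithm for Min$p$U-US, completing the proof.

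I expect the main obstacle to be the bookkeeping around the multiset $B_t^{K_t}$ and the passage from "points counted with multiplicity across blocks" to "distinct covered points at most $4k$": one has to argue carefully that (i) the cost $c(\cdot)$ is only inflated by double-counting so the lower bound $|\mathcal S'|\ge \tfrac{1}{1+\varepsilon}opt_{DkSH}$ survives, and (ii) the feasibility accounting using budget $4k$ on $K_t$ correctly dominates the true number of distinct points, for which the key geometric fact is that a unit square (hence any point) meets at most a bounded number of blocks of the side-$a$ grid. The shifting/averaging step and the time analysis are comparatively routine once the block subproblem (Theorem~\ref{thm0115}) is in hand.
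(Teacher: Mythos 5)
Your candidate construction is a genuinely different (and standard Hochbaum--Maass style) decomposition: you prune from the optimum all squares cut by the grid lines of $P_r$ and average over shifts, whereas the paper keeps every square of $\mathcal O^*$ in every block it intersects (so its candidate needs the budget $\sum_b k_b\le 4k$, which is exactly why line~\ref{line06628-1} allows $K_t\le 4k$). That difference is not itself a problem. The genuine gap is on the \emph{output} side, in your claim ``the cost $c(\cdot)$ is only inflated by double-counting so the lower bound $|\mathcal S'|\ge\frac{1}{1+\varepsilon}opt_{DkSH}$ survives.'' This is backwards. $B_t^{K_t}$ is a multiset and $c(\mathcal S')=\sum_j|\mathcal S(P_{j,k_j})|$ counts a square once for every block it intersects, up to four times; hence $c(\mathcal S')\ge|\mathcal S'|$, and a lower bound on $c(\mathcal S')$ transfers to \emph{no} lower bound on the number of distinct squares $|\mathcal S'|$, which is the actual D$k$SH objective. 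Your argument as written only yields $|\mathcal S'|\ge\frac{1}{4}c(\mathcal S')\ge\frac{1}{4(1+\varepsilon)}|\mathcal O^*|$. Closing this gap is precisely what inequalities \eqref{eq0622-8}--\eqref{eq0622-10} of the paper do: they bound the overcount by $|\mathcal H_r|+2|\mathcal V_r|$ (squares of $\mathcal S'$ cut by horizontal/vertical lines of $P_r$) and then amortize $\sum_r\bigl(|\mathcal H_r|+2|\mathcal V_r|\bigr)\le 3|\mathcal S'|$ over the $a$ shifts, which is where the factor $\frac{a}{a+3}$ and hence the choice $a=\lceil 3/\varepsilon\rceil$ actually come from. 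Your averaging over shifts is spent entirely on the optimal solution and leaves nothing with which to control the output's multiplicities, so a second, output-side amortization (or an equivalent device) is a missing idea, not a bookkeeping detail.

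Two smaller points. First, your feasibility discussion meanders but lands on the correct and essentially trivial statement: whatever $K_t\le 4k$ the argmax selects, the distinct points covered by $\mathcal S'$ number at most $\sum_j k_j=K_t\le 4k$; note however that under your pruned candidate the factor $4$ in the budget is never needed, which is a sign your analysis is not the one the algorithm's line~\ref{line06628-1} is built for. Second, your running-time tally actually produces $O(\frac{1}{\varepsilon}n^4m^{O(1/\varepsilon^2)})$ before you ``absorb'' an extra factor of $n$; that absorption is not justified as stated, and your averaging step $1-2/a\ge\frac{1}{1+\varepsilon}$ additionally requires $\varepsilon\le 1/2$.
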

\begin{proof}
By line \ref{line06628-1} of the algorithm and the definition of $B_i^{K_i}$, the set $\mathcal S'$ computed by Algorithm \ref{algo1} covers at most $4k$ points.

Let $V^*$ be an optimal solution to the D$k$SH-US instance, and let $\mathcal O^*$ be the set of unit squares contained in $V^*$. Consider a partition $P_r$. For block $b$ in $P_r$, let $\mathcal O^*_{r,b}$ be the set of unit squares in $\mathcal O^*$ which have nonempty intersections with $b$. Suppose $\mathcal O^*_{r,b}$ covers $k_b$ points. Let $P^*_{r,b,k_b}$ be a maximum-weight source-sink path in $G_{r,b,k_b}$, where $G_{r,b,k_b}$ is the DAG in Definition \ref{def0131-1} with respect to partition $P_r$. By Theorem \ref{thm0115},
\begin{equation}\label{eq0115-1}
\mathcal S(P^*_{r,b,k_b})\ \mbox{covers at most $k_b$ points and}\ |\mathcal S(P^*_{r,b,k_b})|\geq |\mathcal O^*_{r,b}|.
\end{equation}
It follows that $\bigcup_{\scriptsize\mbox{block}\ b\ \mbox{of} \ P_r}\mathcal S(P^*_{r,b,k_b})$ covers at most $\sum_{\scriptsize\mbox{block}\ b\ \mbox{of}\ P_r}k_b$ points. Note that $\sum_{\scriptsize\mbox{block}\ b\ \mbox{of}\ P_r}k_b\leq 4k$,
because every point covered by $\mathcal O^*$ is counted at most four times in the left term (the repetition comes from unit squares intersecting the boundaries of the blocks). Hence $\bigcup_{\scriptsize\mbox{block}\ b\ \mbox{of}\ P_r}\mathcal S(P^*_{r,b,k_b})$ is a candidate choice of $B_t^{K_t}$ for some $K_t\leq 4k$). By the definition of $B_i^{K_i}$ in \eqref{eq0630-2}, the set $\mathcal S_r$ computed in line \ref{line06628-1} of the algorithm and the final output $\mathcal S'$ of the algorithm satisfy
\begin{equation}\label{eq0622-6}
c(\mathcal S')\geq c(\mathcal S_r)\geq\sum_{\scriptsize\mbox{block}\ b\ \mbox{of}\ P_r}|\mathcal S(P^*_{r,b,k_b})|,\ \mbox{for any}\ r\in\{0,1,\ldots,a-1\}.
\end{equation}

Let $\mathcal H_r$ (resp. $\mathcal V_r$) be the set of unit squares in $\mathcal S'$ which intersect some horizontal (resp. vertical) grid lines of $P_r$. Observe that a unit square can intersect at most four blocks of $P_r$. In other words, the multiplicity of each unit square in $\mathcal S'$ is at most four. Hence,
\begin{equation}\label{eq0622-8}
c(\mathcal S')\leq |\mathcal S'|+|\mathcal H_r|+2|\mathcal V_r|.
\end{equation}
Note that a unit square can not be in both $\mathcal H_r$ and $\mathcal H_{r'}$ for $r\neq r'$. Hence
\begin{equation}\label{eq0622-9}
\sum\limits_{r=0}^{a-1}|\mathcal H_r|\leq |\mathcal S'|.
\end{equation}
Similarly,
\begin{equation}\label{eq0622-10}
\sum\limits_{r=0}^{a-1}|\mathcal V_r|\leq |\mathcal S'|.
\end{equation}

Combining inequalities \eqref{eq0115-1} to \eqref{eq0622-10}, we have
$$
a\cdot |\mathcal O^*|\leq \sum\limits_{r=0}^{a-1}\sum_{\scriptsize\mbox{block}\ b\ \mbox{of}\ P_r}|\mathcal O^*_{c,b}|\leq a\cdot |\mathcal S'|+\sum_{r=0}^{a-1}|\mathcal H_r|+2\sum_{r=0}^{a-1}|\mathcal V_r|\leq (a+3)|\mathcal S'|.
$$
It follows that when $a=\lceil 3/\varepsilon\rceil$,
$$
|\mathcal S'|\geq
\frac{a}{(a+3)}|\mathcal O^*|\geq \frac{1}{1+\varepsilon}|\mathcal O^*|.
$$
The approximation ratio is proved.

Next, consider the time complexity. By Theorem \ref{thm0115}, for each block $b$ of each partition $P_r$, and each integer $k_b$, computing a maximum-weight source-sink path needs time $O(k_b^2m_b^{O(a^2)})=O(n^2m^{O(a^2)})$. Note that the transition formula \eqref{eq0625-1} needs $k_b$ to vary in $\{0,1,\ldots,n\}$. Computing all these maximum-weight source-sink paths needs time $O(n^3m^{O(a^2)})$. For each partition $P_r$, there are $O(n^2)$ sets of the form $B_i^{K_i}$ to be computed, because for each $i=0,1,\ldots,t$ ($t\leq n$), there are $\min\{n+1,4k\}$ choices for $K_i$. Hence, using the transition formula \eqref{eq0625-1} to compute a $B_i^{K_i}$, with all maximum-weight source-sinks paths at hand, needs time $O(n^3)$. So, $\{B_t^{K_t}\}$ can be obtained in time $O(n^3m^{O(a^2)})+O(n^3)=O(n^3m^{O(a^2)})$. Since there are $a=\lceil 3/\varepsilon\rceil$ partitions, the time complexity is $O(an^3m^{O(a^2)})=O(\frac{1}{\varepsilon} n^3m^{O(1/\varepsilon^2)})$.
\end{proof}

Combining Theorem \ref{thm1226} and Theorem \ref{thm0117}, we have the following result, where the additional factor $n$ in the time complexity comes from the proof of Theorem \ref{thm1226} that we have to call the D$k$SH-US algorithm for $k=1,\ldots,|V|$.
\begin{theorem}
There exits a $(\frac{1}{1+\varepsilon},4)$-bicriteria approximation algorithm for Min$p$U-US in time $O(\frac{1}{\varepsilon} n^4m^{O(1/\varepsilon^2)})$.
\end{theorem}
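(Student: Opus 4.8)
The plan is to assemble the final result directly from the two theorems already proved, treating it as a composition of the reduction in Theorem~\ref{thm1226} with the geometric algorithm of Theorem~\ref{thm0117}. First I would record that Theorem~\ref{thm0117} supplies a $(4,\frac{1}{1+\varepsilon})$-bicriteria approximation algorithm for D$k$SH-US running in time $O(\frac{1}{\varepsilon}n^{3}m^{O(1/\varepsilon^{2})})$. Then I would check that the reduction used in the proof of Theorem~\ref{thm1226} stays inside the geometric world: starting from a Min$p$U-US instance $(V,E,p)$ whose hyperedges are induced by unit squares, that reduction only ever invokes the D$k$SH algorithm on the {\em same} hypergraph $(V,E)$ with the budget $k$ ranging over $1,\dots,|V|$, so every sub-instance is itself a D$k$SH-US instance and Theorem~\ref{thm0117} is applicable to each of them.

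Second, I would instantiate Theorem~\ref{thm1226} with $\beta=4$ and $\alpha=\frac{1}{1+\varepsilon}$. The theorem then outputs, for the Min$p$U-US instance, a set $E'$ of hyperedges with $|E'|\geq \frac{p}{1+\varepsilon}$ covering at most $4\,opt_{\mathrm{Min}p\mathrm{U}}$ points; this is exactly a $(\frac{1}{1+\varepsilon},4)$-bicriteria approximate solution. Here the only point requiring care is the orientation of the $(\beta,\alpha)\mapsto(\alpha,\beta)$ swap: the budget-violation factor $4$ of the D$k$SH-US algorithm becomes the {\em approximation ratio} $4$ for Min$p$U-US, while the quality factor $\frac{1}{1+\varepsilon}$ becomes the {\em feasibility-violation} factor for Min$p$U-US.

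Third, for the running time I would unwind the cost of the reduction. It calls the D$k$SH-US algorithm once per value $k\in\{1,\dots,|V|\}$, i.e.\ $n$ times, each call costing $O(\frac{1}{\varepsilon}n^{3}m^{O(1/\varepsilon^{2})})$ by Theorem~\ref{thm0117}; the remaining work—computing each $|E(V_{k})|$ and selecting the smallest index $\ell$ with $|E(V_{\ell})|\geq \alpha p$—is polynomial in $n$ and $m$ and is absorbed into that bound. Multiplying by the $n$ calls gives the claimed $O(\frac{1}{\varepsilon}n^{4}m^{O(1/\varepsilon^{2})})$.

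The genuine difficulty of the paper is entirely upstream, in Theorem~\ref{thm0117} and the dynamic program of Subsection~\ref{sec2.2} that underlies it; the final statement is essentially bookkeeping on top of those. Consequently I do not expect a real obstacle here—only the two routine consistency checks just mentioned (that the geometric structure survives the reduction, and that the extra factor $n$ is correctly accounted for rather than already subsumed).
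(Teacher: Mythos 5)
Your proposal is correct and follows exactly the paper's own route: the paper also obtains this theorem by composing Theorem~\ref{thm1226} (with $\beta=4$, $\alpha=\frac{1}{1+\varepsilon}$) with Theorem~\ref{thm0117}, attributing the extra factor of $n$ in the running time to the $n$ calls of the D$k$SH-US algorithm for $k=1,\ldots,|V|$. Your two consistency checks (that the sub-instances remain geometric and that the factor $n$ is not already subsumed) are exactly the right bookkeeping and match the paper's one-line justification.
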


\section{Conclusion and Discussion}\label{sec3}

In this paper, for the unit-square minimum $p$-union problem (Min$p$U-US), we designed a $(\frac{1}{1+\varepsilon},4)$-bicriteria approximation algorithm, which exploits a relation between Min$p$U and the densest $k$-subhypergraph problem (D$k$SH), and makes full use of geometric structures of unit-squares. New techniques have to be explored in order to further reduce the approximation factor or to obtain a non-bicriteria approximation algorithm.

\section*{Acknowledgment}
This research is supported in part by National Natural Science Foundation of China (11901533, U20A2068, 11771013), and Zhejiang Provincial Natural Science Foundation of China (LD19A010001).

\end{document}